\pgfplotsset{width=8cm,compat=1.9}
\newcommand{\q}[1]{``#1''}
\DeclareMathOperator*{\esssup}{ess\,sup}
\DeclareMathOperator{\diag}{diag}
\renewcommand{\q}[1]{``#1''}
\theoremstyle{plain}
\newtheorem{theorem}{Theorem}[section]
\newtheorem{Proposition}[theorem]{Proposition}
\newtheorem{remark}[theorem]{Remark}
\title{Swing contract pricing: with and without Neural Networks}
\author[1]{Vincent Lemaire}
\author[1]{Gilles Pagès}
\author[1,2]{Christian Yeo}
\affil[1]{\footnotesize Sorbonne Université, Laboratoire de Probabilités, Statistique et Modélisation, UMR 8001, case 158, 4, pl. Jussieu,
F-75252 Paris Cedex 5, France}
\affil[2]{\footnotesize Engie Global Markets, 1 place Samuel Champlain, 92400 Courbevoie, France}
\date{}
\numberwithin{equation}{section}
\begin{document}

%\linenumbers
\maketitle

\begin{abstract}
    We propose two parametric approaches to evaluate swing contracts with firm constraints. Our objective is to define approximations for the optimal control, which represents the amounts of energy purchased throughout the contract. The first approach involves approximating the optimal control by means of an explicit parametric function, where the parameters are determined using stochastic gradient descent based algorithms. The second approach builds on the first one, where we replace parameters in the first approach by the output of a neural network. Our numerical experiments demonstrate that by using Langevin based algorithms, both parameterizations provide, in a short computation time, better prices compared to state-of-the-art methods.
\end{abstract}

\textit{\textbf{Keywords} - Swing contracts, stochastic control, stochastic approximation, neural network, Langevin dynamics, greeks.}

\section*{Introduction}
\indent

With the energy market becoming increasingly deregulated, various derivative products have emerged, offering flexibility in delivery dates and purchased amounts of energy. \textit{Swing} contracts, also known as \textit{Take-or-Pay} contracts (see \cite{Thompson1995ValuationOP} for more details), are among the most widely traded contracts in gas and power markets. These contracts allow their holder to purchase amounts of energy at fixed exercise dates, subject to constraints. There exist two types of constraints: firm or penalized. In the firm constraints setting, the contract holder is not allowed to violate the constraints. In the penalized setting, violating the (global) constraints is penalized proportionally to the deficit/excess of consumption. Our paper focuses on the valuation of swing contracts with firm constraints.

The valuation of such swing contracts is more challenging than that of classic American-style contracts \cite{Broadie2004ASM, Jaillet1990VariationalIA, Longstaff2001ValuingAO, Rogers2002MonteCV}, precisely because of the presence of constraints related to time and volume. From a probabilistic standpoint, the valuation of swing contracts appears as a Stochastic Optimal Control (\textit{SOC}) problem. Here, the control represents the vector of volumes of energy to purchase at each exercise date. To solve this SOC problem, two groups of methods may be considered.

The first group concerns methods based on the so-called \textit{Backward Dynamic Programming Principle} (BDPP). In this group, the price of the swing contract can be written as the solution of a dynamic programming equation (see \cite{Bardou2007WhenAS, Bardou2009OptimalQF, BarreraEsteve2006NumericalMF,Jaillet2004ValuationOC, LariLavassani2002ADV,Thompson1995ValuationOP}). BDPP approach involves a conditional expectation also known as the \q{continuation value} and its numerical computation is the main difficulty. The most commonly used method for computing the continuation value is the Longstaff and Schwartz one (see \cite{Longstaff2001ValuingAO}) where it is approximated as successive orthogonal projection over subspaces spanned by a finite number of squared integrable random variables (see \cite{BarreraEsteve2006NumericalMF, Longstaff2001ValuingAO}). Another method is based on optimal quantization (see \cite{Bardou2009OptimalQF, Pags2004OptimalQM}) where the stochastic dynamics of the traded asset is approximated by space discretization of the whole process taking the shape of the so-called quantization tree. This tree is built offline by nearest neighbour projections of simulated paths of the underlying dynamics. One of the major problems encountered with BDPP approaches is the following. To compute the value of the contract at each exercise date, the maximisation is performed on geometric interval. In practice, this interval needs to be discretized leading to a loss of accuracy. Indeed, achieving a high level of accuracy in pricing the contract requires a finer discretization of the geometric interval, which in turn increases the computation time. Additionally, the Longstaff-Schwartz method faces a storage challenge since, to compute the continuation value, regression coefficients for each simulation and admissible cumulative consumption must be stored at each exercise date. This often leads to a memory overflow when a large number of simulations is required to obtain a price with a tight confidence interval. Meanwhile, the optimal quantization method suffers from the \q{curse of dimensionality} because it is well known that the rate of convergence of this method is of order $\mathcal{O}(N^{-\frac{1}{d}})$ (where $N$ is the number of points of the quantization tree and $d$ is the problem dimension).

An alternative approach to the BDPP one is to consider the valuation of swing contracts as a global optimization problem. As already mentioned, the valuation of swing contracts is equivalent to solve a \textit{SOC} problem where the objective is to find a vector of purchase amounts that maximizes the expected value of cumulative cash flows up to the expiry of the contract. These cash flows have to be simulated leading to a stochastic optimization problem where the set of admissible controls is assimilated to a parameterized family of functions, the optimization being performed by a Stochastic Gradient Descent based on (Monte Carlo) simulations of the underlying dynamics. The use of parametric functions to approximate the optimal control reduces the control problem to a parametric optimization problem. The success of such approaches is highly depending on the choice of the parameterized family of functions which often requires a deep understanding of the optimal control behaviour. Note that, solving \textit{SOC} problem with SGD based algorithms has already been considered in \cite{Bachouch_2021, PierreLangevin, GobetMunos, Han2016DeepLA, 9446979} for general \textit{SOC} problems and more specially in \cite{BarreraEsteve2006NumericalMF} for swing contracts. In this paper, we introduce two parameterized families of functions to approximate the optimal control in the swing valuation problem and where one of the parameterizations is based on deep neural networks. To optimize both parameterizations, we consider two optimization algorithms based on stochastic gradient descent. The first optimization algorithm is the classic \q{Adaptive Moment Estimation} (Adam) algorithm which is widely used for stochastic approximation. As an alternative to Adam algorithm, we propose the use of the Preconditioned Stochastic Gradient Langevin Dynamics (PSGLD) which has demonstrated his effectiveness in recent studies for Bayesian learning. Our results show that using Langevin-type algorithms can accelerate the optimization of our parameterizations and leads to better prices compared to the state-of-the-art methods.

Our paper is organised as follows. Section \ref{sec1}. We describe swing contracts and recall the pricing framework. Section \ref{sec2}. We present our contribution by proposing two parameterized families of functions designed to approximate the optimal control. Section \ref{training_part}. We present generalities about stochastic approximation as well as two optimization algorithms (Adam and PSGLD) to optimize both parameterized controls. Finally in section \ref{sec5}, we present some additional results about the effectiveness of our methods.

\section{On swing contracts}
\label{sec1}

\indent
Due to their dual constraint (exercise dates and volume constraints) nature, swing contracts present some specificities when it comes to their evaluation. This section provides a reminder of the general background on swing contracts.

\subsection{Description}

\indent
A swing option is a commonly encountered derivative product on energy (gas and power) markets which allows its holder to buy amounts of energy $q_{\ell}$ at times $t_\ell = \frac{\ell T}{n}$, $\ell = 0, \ldots,n-1$ (called exercise dates) until the contract maturity $t_n = T$. At each exercise date $t_\ell$, the unitary purchase price (or strike price of the contract) is denoted $K_\ell$. It can be constant (i.e. $K_\ell = K, \ell = 0,\ldots,n-1$) or indexed on the price of either the underlying instrument or another one (oil for example). In this paper, we focus on the fixed strike price case but the case with indexed strike can be treated likewise.

One important feature of a swing option are the volume constraints. Swing option gives its holder a flexibility on the amount of energy he is allowed to purchase and this amount is subject to two kinds of (firm) constraints
\begin{itemize}
    \item \textbf{Local constraints}: at each exercise date $t_\ell$, the holder of the swing contract is allowed to buy at least $q_{\min}$ and at most $q_{\max}$ of the underlying i.e.,
    \begin{equation}
    \label{loc_const}
    0 \le q_{\min} \le q_{\ell} \le q_{\max}, \hspace{0.5cm} 0 \le \ell \le n-1.
    \end{equation}

    \item \textbf{Global constraints}: the cumulative consumption up to the maturity of the contract must be not lower than $Q_{\min}$ and not greater than $Q_{\max}$ i.e.,
    \begin{equation}
    \label{glob_const}
    Q_{n} = \sum_{\ell = 0}^{n-1} q_{\ell} \in [Q_{\min}, Q_{\max}], \hspace{0.4cm} \text{with} \hspace{0.2cm} Q_0 = 0 \hspace{0.2cm} \text{and} \hspace{0.2cm} 0 \le Q_{\min} \le Q_{\max} \le +\infty.
    \end{equation}
\end{itemize}

In this paper, we only consider \textbf{firm constraints} which means that the holder of the contract is not allowed to violate the constraints. In this setting, the existence of an optimal (\textit{bang-bang}) consumption had been established in \cite{Bardou2007WhenAS}. A similar result is proved for the case with penalties in \cite{BarreraEsteve2006NumericalMF}. Recall that the penalty setting corresponds to the case where the holder of the swing contract is allowed to violate the global constraints but is faced to penalties proportional to the excess/deficit of consumption (purchased volumes of energy).

\vspace{0.2cm}
In the firm constraints case, the local and global constraints allow to draw the swing physical space which represents possible actions per exercise date.

\subsection{Physical space}
\label{physical_space}

\indent
At each exercise date, due to firm constraints, the attainable cumulative volumes are upper and lower bounded by two functions $t_\ell \mapsto Q^{down}(t_\ell)$ and $t_\ell \mapsto Q^{up}(t_\ell)$

\vspace{0.1cm}
\begin{equation}
\left\{
    \begin{array}{ll}
        Q^{down}(t_0) = 0,\\
        \displaystyle Q^{down}(t_\ell) = \max\big(0, Q_{\min} - (n-\ell) \cdot q_{\max} \big) \text{,} \hspace{0.2cm} \ell \in \{1,\ldots,n-1\}, \\
        Q^{down}(t_n) = Q_{\min},
    \end{array}
\right.
\end{equation}

\vspace{0.1cm}
\noindent
and

\begin{equation}
\left\{
    \begin{array}{ll}
        Q^{up}(t_0) = 0,\\
        \displaystyle Q^{up}(t_\ell) = \min\big(\ell \cdot q_{\max}, Q_{\max}\big)\text{,} \hspace{0.2cm} \ell \in \{1,\ldots,n-1\},\\
        Q^{up}(t_n) = Q_{\max},
    \end{array}
\right.
\end{equation}

\vspace{0.2cm}

\noindent
These boundaries lead to the physical space of the swing contract, as drawn in Figure \ref{physical space swing}: they represent at each exercise date, the range of attainable cumulative consumptions.
\newpage

\begin{figure}[!ht]
    \center
    \includegraphics[scale=0.5]{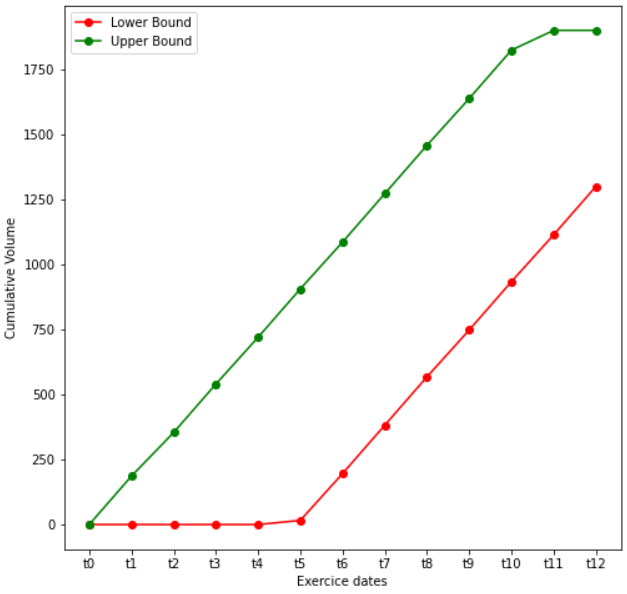}
    \caption{Illustration of the swing physical space with $q_{\min} = 0, q_{\max} = 180, Q_{\min} = 1300, Q_{\max} = 1900$ and $n=12$ exercise dates.}
    \label{physical space swing}
\end{figure}

\noindent
In Figure \ref{physical space swing}, we assumed that $q_{\min} = 0$. This assumption can be made without loss of generality because, as shown in Appendix \ref{swing decompo} (proof reproduced for the reader convenience), the general case can always reduce to the case where $q_{\min} = 0$ and $q_{\max} = 1$. Besides, if at an exercise date $t_\ell$, we have bought amounts $q_{0},...,q_{\ell-1}$ leading to a cumulative consumption $\displaystyle Q_\ell = \sum_{i = 0}^{\ell -1} q_i$, then due to local constraints and the cumulative consumption boundaries, the actual range of attainable cumulative consumption at time $t_{\ell+1}$ is given by
\begin{equation}
    \Big[\underbrace{\max(Q^{down}(t_{\ell+1}), Q_{\ell} + q_{\min})}_{:= L_{{\ell+1}}(Q_\ell)}, \hspace{0.2cm}  \underbrace{\min(Q^{up}({\ell+1}), Q_{\ell} + q_{\max})}_{:= U_{{\ell+1}}(Q_\ell)}\Big].
    \label{range_cum_vol}
\end{equation}

\noindent
Thus, starting with a cumulative consumption $Q_{\ell}$, at time $t_\ell$, the range of admissible volumes (to be purchased) is
\begin{equation}
    \Big[\underbrace{L_{{\ell+1}}(Q_\ell) - Q_\ell}_{:= A_\ell^{-}(Q_\ell)}, \hspace{0.2cm}  \underbrace{U_{{\ell+1}}(Q_\ell) - Q_\ell}_{:= A_\ell^{+}(Q_\ell)}\Big].
    \label{intervalle_control}
\end{equation}

It is important to note the \textit{bang-bang} feature of the swing contracts in the firm constraints setting. It had been established in \cite{Bardou2007WhenAS} that when volume constraints are integers and $Q_{\max} - Q_{\min}$ is a multiple of $q_{\max} - q_{\min}$, then the optimal consumption (volume to purchase) is bang-bang. The latter means that, at each exercise date $t_\ell$, the optimal consumption $q_\ell$ lies within $\{A_\ell^{-}, A_\ell^{+} \}$ (see \eqref{intervalle_control}). The bang-bang feature is particularly valuable since it significantly reduces the computation time.

\vspace{0.2cm}

With these basic blocks, we can now set up the theoretical framework for the pricing of swing contracts.

\subsection{Pricing and sensitivity calculus}
\label{pricing}

\indent
The price at time $t$ of the forward contract delivered at the maturity $T$ is denoted by $F_{t, T}$. It is worth noting that, in the firm constraints setting aforementioned, there is no action to be done at the maturity $T$ whereas in the case with penalty (see \cite{BarreraEsteve2006NumericalMF}), a penalty is applied, at the maturity $T$, in case of violation of global constraints defined in \eqref{glob_const}. This entails that, in practice and in the firm constraints setting, the underlying asset $F_{t, T}$ needs to be diffused until $t_{n-1}$ unlike the case with penalty where we need to diffuse until $t_n = T$. We keep the notation $t_{n-1}$ to denote the last exercise date following previous works \cite{Bardou2007WhenAS, Bardou2009OptimalQF, BarreraEsteve2006NumericalMF}.

In this paper (as in \cite{Bardou2009OptimalQF, BarreraEsteve2006NumericalMF}), we consider contracts on the spot price ($S_t = F_{t,t}$) even if in practice the spot turns out not to be a tradable instrument on market. In practice, the most encountered gas contract is the day-ahead forward contract whose price is  $F_{t, t+1}$. But this case can be treated in the same way. We also consider a probability space $\big(\Omega, \mathcal{F}, \{ \mathcal{F}_{t_\ell}, 0 \le \ell \le n-1 \}, \mathbb{P} \big)$ where $(\mathcal{F}_{t_\ell})_{0\le \ell \le n-1}$ is the natural completed filtration of the process $\big(S_{t_\ell}\big)_{0\le \ell \le n-1}$.

The decision process $(q_{\ell})_{0 \le \ell \le n-1}$ is defined on the same probability space and is supposed $\mathcal{F}_{t_\ell}$- adapted. At each exercise date $t_\ell$, by buying a volume $q_\ell$, the holder of the contract makes an algebraic gross profit (or loss)
\begin{equation}
\psi_\ell\left(q_{\ell}, S_{t_\ell} \right) := q_{\ell} \cdot\left(S_{t_\ell} - K\right).
\end{equation}

Given a cumulative consumption $Q \in \mathbb{R}_{+}$, an admissible strategy at time $t_k$ is a vector $(q_k, \ldots, q_{n-1})$ lying in the following set
\begin{equation*}
\mathcal{A}_{k, Q}^{Q_{\min}, Q_{\max}} = \left\{(q_{\ell})_{k \le \ell \le n-1}, \hspace{0.1cm} q_{\ell} : (\Omega, \mathcal{F}_{t_{\ell}}, \mathbb{P}) \mapsto [q_{\min}, q_{\max}], \hspace{0.1cm} \sum_{\ell = k}^{n-1} q_{\ell} \in \big[(Q_{\min}-Q)_{+}, Q_{\max}-Q \big] \right\}.
\end{equation*}

\noindent
Using \eqref{intervalle_control}, this set also reads,
\begin{equation}
\label{actual_set_adm_const_case}
\mathcal{A}_{k, Q}^{Q_{\min}, Q_{\max}} = \left\{(q_{\ell})_{k \le \ell \le n-1}, \hspace{0.1cm} q_{\ell} : (\Omega, \mathcal{F}_{t_{\ell}}, \mathbb{P}) \mapsto \big[A_\ell^{-}(Q_\ell), A_\ell^{+}(Q_\ell) \big], \hspace{0.1cm} \text{where} \hspace{0.1cm} Q_\ell = Q + \sum_{i = k}^{\ell - 1} q_i \right\},
\end{equation}

\noindent
with the convention $\displaystyle \sum_{i = k}^{k - 1} q_i = 0$. Then for every non negative $\mathcal{F}_{t_{k-1}}$- measurable random variable $Q$, the price of the swing contract at time $t_k$ starting from a cumulative consumption $Q$ is given by
\begin{equation}
    P_k\left(S_{t_k}, Q \right) = \mathbb{P}-\esssup_{(q_{\ell})_{k \le \ell \le n-1} \in \mathcal{A}_{k, Q}^{Q_{\min}, Q_{\max}}} \hspace{0.1cm} \mathbb{E}\left(\sum_{\ell=k}^{n-1} e^{-r_\ell(t_{\ell} - t_k)} \psi_\ell\left(q_{\ell}, S_{t_{\ell}} \right) \rvert \mathcal{F}_{t_k} \right),
    \label{pricing_swing_formula}
\end{equation}

\noindent
where the expectation is taken under the risk-neutral probability $\mathbb{P}$ and $(r_{\ell})_\ell$ are interest rates that we will assume to be zero from now on. The price of the swing contract, at time $t_0 = 0$, is given by ($S_{t_0}$ is assumed to be deterministic)
\begin{equation}
    P_0 := \underset{(q_\ell)_{0 \le \ell \le n-1} \in \mathcal{A}_{0, 0}^{Q_{\min}, Q_{\max}}}{\sup} \hspace{0.1cm}  \mathcal{J}\big(q_0, \ldots, q_{n-1}\big),
    \label{swing_init_price}
\end{equation}

\noindent
where, given an admissible strategy $(q_0, \ldots, q_{n-1})$, the reward function $\mathcal{J}$ is defined as the expected value of cumulative future cash flows up to the expiry
\begin{equation}
\label{obj_func_J}
\mathcal{J}\big(q_0, \ldots, q_{n-1}\big) := \mathbb{E}\left(\sum_{\ell=0}^{n-1} \psi_{\ell}\big(q_{\ell}, S_{t_{\ell}} \big)\right).
\end{equation}

The problem \eqref{swing_init_price} appears to be a constrained stochastic control problem in which the aim is to find an admissible strategy (consumption) that maximizes the reward function $\mathcal{J}$. To achieve this, we propose two parametric strategies. Unless otherwise stated, for numerical computations, we use a one-factor model like in \cite{BarreraEsteve2006NumericalMF, Jaillet2004ValuationOC}:
\begin{equation}
	\frac{dF_{t, T}}{F_{t_, T}} = \sigma_F \cdot e^{-\lambda(T-t)}dW_t, \hspace{0.4cm}  t \le T,
	\label{hjm_model}
\end{equation}

\noindent
where $(W_t)_{t \ge 0}$ is a standard Brownian motion. As mentioned before, we deal with the spot price whose price can be derived by a straightforward application of Itô
\begin{equation}
\label{spot_model}
S_t = F_{0,t} \cdot \exp\big(\sigma_F \cdot X_t - \frac{1}{2}\Sigma_t^2\big), \hspace{0.3cm} X_t = \int_{0}^{t} e^{-\lambda (t-s)}\, \mathrm{d}W_s \hspace{0.3cm} \text{and} \hspace{0.3cm} \Sigma_t^2 = \frac{\sigma^2}{2\lambda}\big(1-e^{-2\lambda t}\big).
\end{equation}

\newpage

\noindent
Along with this model, we consider two settings (presented below). In both cases, we set $\lambda = 4, \sigma_F = 0.7$.

\vspace{0.3cm}
\textbf{Case 1}

\begin{center}
$n = 31$ exercise dates \hspace{0.4cm} $q_{\min} = 0 \hspace{0.4cm} q_{\max} = 6 \hspace{0.4cm} Q_{\min} = 140 \hspace{0.4cm} Q_{\max} = 200$. 
\end{center}

\textbf{Case 2} (as in \cite{Bardou2009OptimalQF, BarreraEsteve2006NumericalMF})

\begin{center}
$n =365$ exercise dates \hspace{0.4cm} $q_{\min} = 0 \hspace{0.4cm} q_{\max} = 6 \hspace{0.4cm} Q_{\min} = 1300 \hspace{0.4cm} Q_{\max} = 1900$.
\end{center}

The computer used for all computations and simulations has the following characteristics: \textit{Processor: Intel(R) Core(TM) i7-1185G7 @ 3.00GHz, 32Go of RAM, Microsoft Windows 10 Enterprise}.

\vspace{0.2cm}

Besides, for practitioners, prices of derivative products are closely linked to their sensitivities with respect to market data. These sensitivities are essential for hedging purposes, as they indicate how the price of a derivative product changes with the market. Computing sensitivities involves calculating derivatives of functions, and in our case, we need to differentiate a price with respect to some parameters of the underlying diffusion, where the price is given by \eqref{swing_init_price}. To achieve this, we rely on the so-called \textit{envelope theorem}. Let us briefly recall a short background of this theorem.

\vspace{0.2cm}

Let $f(x,\alpha)$ and  $g_{j}(x,\alpha),j=1,2,\ldots ,m$ be real-valued continuously differentiable functions on $\mathbb{R}^{n+\ell}$ , where $x\in \mathbb {R} ^{n}$ are some variables and $\alpha \in \mathbb {R}^{\ell}$ are parameters, and consider the constrained optimization problem
$$
\left\{
    \begin{array}{ll}
        \underset{x}{\max} \hspace{0.1cm} f(x, \alpha)\\
        \text{subject to} \hspace{0.2cm} g_{j}(x,\alpha) \ge 0, \hspace{0.4cm} 1 \le j \le m.
    \end{array}
\right.
$$

\noindent
We introduce the Lagrangian function,
$$\mathcal{L}\left(x, \mu, \alpha \right) = f(x, \alpha) + \langle\mu, g(x, \alpha) \rangle,$$

\noindent
where $\mu \in \mathbb{R}^{m}$ is the Lagrange multipliers, $\langle \cdot,\cdot \rangle$ is the Euclidean inner-product and $g = (g_1, \ldots, g_m)^\top$ (for a vector $x$, $x^\top$ denotes its transpose). Then we define the value function $V(\alpha) = f(x^{*}(\alpha), \alpha)$ where $x^{*}(\alpha)$ is a solution that maximizes the function $f(\cdot, \alpha)$. The following theorem gives the derivative of the value function $V$ in case it is differentiable.

\begin{theorem}[Envelope theorem]
\label{env_thm}
Assume that $V$ and $\mathcal {L}$ are continuously differentiable where $\frac{\partial \mathcal{L}}{\partial \alpha_k} = \frac{\partial f}{\partial \alpha_k} + \langle\mu, \frac{\partial g}{\partial \alpha_k}\rangle$. Then,

$$\frac{\partial V(\alpha)}{\partial \alpha_k} = \frac{\partial \mathcal{L}\left(x^{*}(\alpha), \mu^{*}(\alpha), \alpha  \right)}{\partial \alpha_k}, \hspace{0.7cm} k=1,\ldots,\ell.$$
\end{theorem}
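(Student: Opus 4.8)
The plan is to exploit the Karush--Kuhn--Tucker structure of the problem together with the chain rule, arranging the computation so that every term involving the (a priori unknown) derivatives of $x^{*}(\alpha)$ and $\mu^{*}(\alpha)$ cancels. Throughout, I assume that the maximizer $x^{*}(\alpha)$ and an associated multiplier $\mu^{*}(\alpha)$ depend in a continuously differentiable way on $\alpha$ in a neighbourhood of the point under consideration, and that they satisfy the first-order optimality (stationarity) condition $\partial \mathcal{L}/\partial x = 0$ and the complementary slackness conditions $\mu_j^{*}(\alpha)\,g_j(x^{*}(\alpha),\alpha) = 0$. This smooth dependence is what makes the differentiation below legitimate; it is guaranteed, via the implicit function theorem, under a suitable constraint qualification, and it is consistent with the standing hypothesis that $V$ is continuously differentiable.

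First I would rewrite the value function by means of complementary slackness. Since $\mu_j^{*}(\alpha)\,g_j(x^{*}(\alpha),\alpha)=0$ for every $j$, summing over $j$ gives $\langle \mu^{*}(\alpha), g(x^{*}(\alpha),\alpha)\rangle = 0$, so that
\begin{equation}
V(\alpha) = f(x^{*}(\alpha),\alpha) = \mathcal{L}\big(x^{*}(\alpha),\mu^{*}(\alpha),\alpha\big).
\end{equation}
This identity is the crux of the argument: it lets me differentiate the Lagrangian, whose stationarity in $x$ I can invoke, rather than differentiating $f$ directly, where no such cancellation is available.

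Next I would differentiate the right-hand side with respect to $\alpha_k$ by the chain rule, treating $x^{*}$ and $\mu^{*}$ as functions of $\alpha$:
$$\frac{\partial V}{\partial \alpha_k} = \Big\langle \frac{\partial \mathcal{L}}{\partial x}, \frac{\partial x^{*}}{\partial \alpha_k}\Big\rangle + \Big\langle \frac{\partial \mathcal{L}}{\partial \mu}, \frac{\partial \mu^{*}}{\partial \alpha_k}\Big\rangle + \frac{\partial \mathcal{L}}{\partial \alpha_k},$$
all partial derivatives being evaluated at $\big(x^{*}(\alpha),\mu^{*}(\alpha),\alpha\big)$. The first bracket vanishes by the stationarity condition $\partial \mathcal{L}/\partial x = 0$ at the optimum. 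For the second bracket, I note that $\partial \mathcal{L}/\partial \mu_j = g_j(x^{*},\alpha)$ and argue termwise that each product $g_j(x^{*},\alpha)\,\partial \mu_j^{*}/\partial \alpha_k$ is zero: for an active constraint $g_j(x^{*},\alpha)=0$, whereas for an inactive one $\mu_j^{*}\equiv 0$ in a neighbourhood, so $\partial \mu_j^{*}/\partial \alpha_k = 0$. Only the last term then survives, yielding exactly $\partial \mathcal{L}\big(x^{*}(\alpha),\mu^{*}(\alpha),\alpha\big)/\partial \alpha_k$, which is the claimed formula.

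The main obstacle is precisely the local constancy of the active set that underlies the termwise cancellation in the second bracket: the clean dichotomy \q{either $g_j=0$ or $\mu_j^{*}\equiv 0$ nearby} can fail at points where a constraint is weakly active (binding with a vanishing multiplier) or where the active set changes with $\alpha$. To make this rigorous I would invoke strict complementarity together with a constraint qualification, for instance the linear independence of the gradients of the active constraints, so that the implicit function theorem delivers both the required $C^1$ dependence of $x^{*}$ and $\mu^{*}$ and a locally constant active set. Under the theorem's hypothesis that $V$ and $\mathcal{L}$ are continuously differentiable, this regularity is exactly what is being assumed, and the algebra above then goes through without further difficulty.
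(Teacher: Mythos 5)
The paper does not actually prove this statement: Theorem \ref{env_thm} is recalled as classical background and is used only as a tool to derive the delta formula in the Proposition that follows it, so your argument is not competing with any proof in the paper. On its own merits, your proof is the standard textbook argument and it is correct: complementary slackness gives $V(\alpha)=\mathcal{L}\big(x^{*}(\alpha),\mu^{*}(\alpha),\alpha\big)$, the chain rule splits $\partial V/\partial\alpha_k$ into three terms, stationarity of $\mathcal{L}$ in $x$ kills the first, and the active/inactive dichotomy kills the second. Two remarks. First, the termwise cancellation in the second bracket does not actually require strict complementarity: if $g_j(x^{*}(\alpha),\alpha)=0$, the factor $g_j$ annihilates the term regardless of the value of $\partial\mu_j^{*}/\partial\alpha_k$, and if $g_j(x^{*}(\alpha),\alpha)>0$, then continuity of $\alpha\mapsto g_j(x^{*}(\alpha),\alpha)$ together with complementary slackness forces $\mu_j^{*}\equiv 0$ in a neighbourhood; strict complementarity and the constraint qualification are needed only where you correctly locate them, namely to obtain the $C^1$ dependence of $(x^{*},\mu^{*})$ on $\alpha$ via the implicit function theorem. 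Second, that $C^1$ dependence is a genuine addition to the theorem's literal hypotheses, which only assume $V$ and $\mathcal{L}$ continuously differentiable; you flag this honestly, and it is the standard price of the classical proof, but for the argument to be self-contained those regularity conditions should be written into the statement of the theorem rather than assumed in the proof.
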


In model \eqref{hjm_model}, we deduce the following proposition which is a corollary of Theorem \ref{env_thm}.

\begin{Proposition}
Let $\left(q_{k}^{*} \right)_{0 \le k \le n-1}$ be a solution of the problem \eqref{swing_init_price}. Assume that the function
$$\left(F_{0, t_k} \right)_{0 \le k \le n-1} \mapsto \mathbb{E}\left(\sum_{k = 0}^{n-1} q_{k}^{*} \cdot \left( F_{0, t_k} \cdot e^{\sigma_F \cdot X_{t_k} - \frac{1}{2}\Sigma^2_{t_k}} - K \right) \right)$$

\noindent
is continuously differentiable. Let $P_0$ be the price of the swing contract \eqref{swing_init_price}. Note that $P_0$ is a function of $\big(S_{t_k})_{0 \le k \le n-1}$ which in turns depends on $\big(F_{0,t_k})_{0 \le k \le n-1}$ in the model \eqref{hjm_model}. Then for all $k = 0,\ldots, n-1$, the delta (sensitivity of swing price with respect to the initial forward price) is given by
$$\frac{\partial P_0}{\partial F_{0, t_k}} = \mathbb{E}\left(q_{k}^{*} \cdot e^{\sigma_F \cdot X_{t_k} - \frac{1}{2}\Sigma^2_{t_k} } \right),$$

\noindent
where $X_{t_k}$ is defined in \eqref{spot_model}.
\end{Proposition}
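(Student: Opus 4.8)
The plan is to read the swing pricing problem \eqref{swing_init_price} as precisely the constrained optimisation problem to which Theorem \ref{env_thm} applies, identifying the controls $(q_\ell)_{0\le \ell \le n-1}$ with the decision variable $x$ and the initial forward prices $\alpha = (F_{0,t_k})_{0\le k\le n-1}$ with the parameter. The objective is the reward $f(x,\alpha) = \mathcal{J}(q_0,\ldots,q_{n-1})$ of \eqref{obj_func_J}, in which the parameter-dependence enters only through $S_{t_\ell} = F_{0,t_\ell}\, e^{\sigma_F X_{t_\ell} - \frac{1}{2} \Sigma_{t_\ell}^2}$ from \eqref{spot_model}, and the value function is $V(\alpha) = P_0$.

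The first and decisive observation I would make is that the feasible set $\mathcal{A}_{0,0}^{Q_{\min},Q_{\max}}$ of \eqref{actual_set_adm_const_case} does not depend on $\alpha$ at all. Indeed, the volume bounds $A_\ell^{-},A_\ell^{+}$ of \eqref{intervalle_control} are built solely from $q_{\min},q_{\max},Q_{\min},Q_{\max}$, and the only other requirement, that each $q_\ell$ be $\mathcal{F}_{t_\ell}$-measurable, is also parameter-free, since the underlying filtration is that of the process $X$ of \eqref{spot_model}, which carries no dependence on the deterministic curve $(F_{0,t_k})$. Writing the constraints as $g(x,\alpha)\ge 0$ in the Lagrangian of Theorem \ref{env_thm}, this forces $\partial g/\partial F_{0,t_k} \equiv 0$, so the envelope identity collapses to $\partial P_0/\partial F_{0,t_k} = \partial \mathcal{L}/\partial F_{0,t_k}\big|_{q^*} = \partial \mathcal{J}/\partial F_{0,t_k}\big|_{q=q^*}$: the optimal control $q^*$ may be frozen while differentiating, which is exactly the content one wants from the envelope theorem here.

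It then remains to compute $\partial \mathcal{J}/\partial F_{0,t_k}$ with $q^*$ held fixed. Substituting \eqref{spot_model} into \eqref{obj_func_J} and differentiating under the expectation, only the $\ell = k$ summand depends on $F_{0,t_k}$, and its derivative is $q_k^*\, e^{\sigma_F X_{t_k} - \frac{1}{2}\Sigma_{t_k}^2}$, yielding the announced formula $\mathbb{E}\big(q_k^*\, e^{\sigma_F X_{t_k}-\frac{1}{2}\Sigma_{t_k}^2}\big)$.

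The step I expect to require the most care is the interchange of differentiation and expectation: this is exactly what the hypothesis that the map $(F_{0,t_k})_k \mapsto \mathbb{E}\big(\sum_k q_k^*(F_{0,t_k}e^{\sigma_F X_{t_k}-\frac{1}{2}\Sigma_{t_k}^2}-K)\big)$ be continuously differentiable is assumed to guarantee, alongside the continuous differentiability of $V$ and $\mathcal{L}$ required to invoke Theorem \ref{env_thm}. A secondary point worth flagging is that Theorem \ref{env_thm} is stated for a finite-dimensional decision variable, whereas the $q_\ell$ are random variables; the bang-bang structure recalled just after \eqref{intervalle_control} reduces each $q_\ell$ to a finite set of values, and it is this reduction that makes the formal application of the envelope theorem legitimate in the present setting.
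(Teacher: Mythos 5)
Your proposal follows essentially the same route as the paper: both invoke the envelope theorem (Theorem \ref{env_thm}) with the controls $(q_\ell)$ as decision variables and the initial forward prices $(F_{0,t_k})$ as parameters, observe that the constraint functions do not depend on $F_{0,t_k}$ so that the derivative of the Lagrangian at the optimum reduces to the derivative of the objective with $q^*$ frozen, and conclude by differentiating under the expectation. The only difference is one of detail: the paper explicitly verifies, via domination and the Lebesgue theorem, that the objective $f$ is continuously differentiable in each $F_{0,t_k}$ and in each $q_k$ (hence that the interchange of derivative and expectation is legitimate), a verification you partly defer to the stated hypothesis.
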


\begin{proof}
We define the following functions:
$$f\left((q_{k})_{0 \le k \le n-1}, (F_{0, t_k})_{0 \le k \le n-1} \right) := \mathbb{E}\left(\sum_{k = 0}^{n-1} q_{k} \cdot \left( F_{0, t_k} \cdot e^{\sigma_F \cdot X_{t_k} - \frac{1}{2}\Sigma^2_{t_k}} - K \right) \right),$$
$$g_1\left((q_{k})_{0 \le k \le n-1}, (F_{0, t_k})_{0 \le k \le n-1}\right) := \sum_{k=0}^{n-1} q_{k} - Q_{\min},$$
$$g_2\left((q_{k})_{0 \le k \le n-1}, (F_{0, t_k})_{0 \le k \le n-1}\right) := Q_{\max}- \sum_{k=0}^{n-1} q_{k} ,$$

\noindent
and for all $k = 0, \ldots, n-1$:
$$g_{2k+3}\left((q_{k})_{0 \le k \le n-1}, (F_{0, t_k})_{0 \le k \le n-1}\right) := q_{k} - q_{\min},$$
$$g_{2k+4}\left((q_{k})_{0 \le k \le n-1}, (F_{0, t_k})_{0 \le k \le n-1}\right) := q_{\max} - q_{k}.$$

\noindent
Then it suffices to prove that functions $f, g_1, g_2, \ldots, g_{2n+2}$ are continuously differentiable in order to use the envelope theorem. This holds for functions $g_1, g_2, \ldots, g_{2n+2}$. It remains to prove it for function $f$. The latter reduces to show that function $f$ is continuously differentiable in each of its components. For any $k = 0, \ldots, n-1$, the random variable $q_{k} \cdot \big(F_{0, t_k} \cdot e^{\sigma_F \cdot X_{t_k} - \frac{1}{2}\Sigma^2_{t_k}} - K \big)$ is integrable since
$$\Big|q_{k} \cdot \big(F_{0, t_k} \cdot e^{\sigma_F \cdot X_{t_k} - \frac{1}{2}\Sigma^2_{t_k}} - K \big)\Big| \le q_{\max} \cdot \big(F_{0, t_k} \cdot e^{\sigma_F \cdot X_{t_k} - \frac{1}{2}\Sigma^2_{t_k}} + K \big) \in \mathbb{L}_\mathbb{R}^1(\mathbb{P}).$$

\noindent
Moreover, the function $F_{0, t_k} \mapsto q_{k} \cdot\big(F_{0, t_k} \cdot e^{\sigma_F \cdot X_{t_k} - \frac{1}{2}\Sigma^2_{t_k}} - K \big)$ is differentiable and its derivative does not depend on $F_{0, t_k}$. Furthermore,
\begin{align*}
\Big|\frac{\partial}{\partial F_{0, t_k}} \Big(q_{k} \cdot \big(F_{0, t_k} \cdot e^{\sigma_F \cdot X_{t_k} - \frac{1}{2}\Sigma^2_{t_k}} - K \big) \Big)\Big| = \Big|q_{k} \cdot e^{\sigma_F \cdot X_{t_k} - \frac{1}{2}\Sigma^2_{t_k}} \Big| \le q_{\max} \cdot e^{\sigma_F \cdot X_{t_k}} \in \mathbb{L}_\mathbb{R}^1(\mathbb{P}).
\end{align*}

\noindent
Then thanks to Lebesgue theorem to interchange derivation and integral, the function $f$ is continuously differentiable in all $F_{0, t_k}$. Likewise one may also show that for all $k = 0, \ldots, n-1$ the function $f$ is continuously differentiable in $q_{k}$, and one may use the envelope theorem. For any $k=0,\ldots,n-1$ we introduce the Lagrangian of the problem \eqref{swing_init_price}
\begin{align*}
	\mathcal{L}\left((q_{k})_{0 \le k \le n-1},\mu, \left(F_{0, t_k}\right)_{0 \le k \le n-1} \right) &= \mathbb{E}\left(\sum_{k = 0}^{n-1} q_{k} \cdot \left( F_{0, t_k} \cdot e^{\sigma_F \cdot X_{t_k} - \frac{1}{2}\Sigma^2_{t_k}} - K \right) \right) \\
	&+ \langle \mu, g\big((q_{k})_{0 \le k \le n-1}, (F_{0, t_k})_{0 \le k \le n-1} \big) \rangle,
\end{align*}

\noindent
where $g = \left(g_1, \ldots, g_{2n+2}\right)^\top$ and $\mu \in \mathbb{R}^{2n+2}$. Then it follows from envelope theorem that for any $k = 0, \ldots, n-1$:
\begin{align*}
\frac{\partial P_0}{\partial F_{0, t_k}} = \mathbb{E}\left(\sum_{k = 0}^{n-1} q_{k}^{*} \cdot \frac{\partial}{\partial F_{0, t_k}} \left( F_{0, t_k} \cdot e^{\sigma_F \cdot X_{t_k} - \frac{1}{2}\Sigma^2_{t_k}} - K \right) \right) = \mathbb{E}\left(\sum_{k = 0}^{n-1} q_{k}^{*} \cdot e^{\sigma_F \cdot X_{t_k} - \frac{1}{2}\Sigma^2_{t_k}} \right).
\end{align*}

\noindent
This completes the proof. 
\end{proof}

We have outlined the theoretical framework for pricing swing contracts. To obtain practical numerical solutions, we introduce some parametric methods based on a global optimization approach.

\section{Swing pricing: a global optimization approach}
\label{sec2}
\indent
In this paper, we propose to solve the optimization problem \eqref{swing_init_price} through a global optimization approach. It is worth noting that the use of that kind of approaches in the swing pricing context is not new and has demonstrated its advantages when compared to dynamic programming based methods like the classic method of Longstaff and Schwartz (see \cite{Longstaff2001ValuingAO}). We refer the reader to the paper of Barrera-Esteve et al. \cite{BarreraEsteve2006NumericalMF} where a global optimization approach had been used to evaluate a swing contract in case with penalties.

The strategy is \emph{bang-bang} so at each exercise date $t_k$, given a cumulative consumption $Q_k$, the random variable $q_k$ takes values in a 2-state set $\{ A_k^{-}(Q_k), A_k^{+}(Q_k) \}$. Since this random variable $q_k$ is measurable with respect to $\mathcal{F}_{t_k}$, there exists $B_k \in \mathcal{F}_{t_k}$ such that
\begin{equation*}
q_k = A_k^{-}(Q_k) + \big(A_k^{+}(Q_k) - A_k^{-}(Q_k) \big) \mathbf{1}_{B_k}, \quad 0 \le k \le n-1.
\end{equation*}
Our approach involves approximating the decision function $\mathbf{1}_{B_k}$ with a parametric function that takes values in the range $[0,1]$. More precisely, we use at each time step $t_k$ a parametric function
\begin{align} \label{def_chi_k}
\chi_k: \mathbb{R}^d \times \mathbb{R}_+ \times \Theta & \to \mathbb{R}, \\
(S, Q, \theta) &\mapsto \chi_k(S, Q, \theta) \notag
\end{align}
where $\Theta$ is a finite-dimensional parameter space that will be specified later. We compose this parametric function with the logistic function $\sigma$ to map $\mathbb{R}$ to the interval $[0,1]$. The parametric strategy writes then 
\begin{equation}
\label{q_parameteric}
q_k(S_k, Q_k^\theta, \theta) = A_k^{-}(Q_k^\theta) + \big(A_k^{+}(Q_k^\theta) - A_k^{-}(Q_k^\theta) \big) \sigma \big( \chi_k(S_{t_k}, Q_k^\theta, \theta) \big), \quad 0 \le k \le n-1,
\end{equation}
where the cumulative consumption $Q_k^{\theta}$ is updated as follows
\begin{equation*}
    Q_0^{\theta} = 0 \quad \text{and} \quad 
    Q_k^{\theta} = \sum_{\ell = 0}^{k-1} q_\ell(S_{t_\ell}, Q_\ell^\theta, \theta) \quad 
    \text{for } 1 \le k \le n-1.
\end{equation*}

Using this parametric strategy, the optimization problem  \eqref{swing_init_price} using the global revenue \eqref{obj_func_J} is approximated by the following finite-dimensional optimization problem
\begin{equation}
    \label{def_opt_pb}
    \theta^* \in \displaystyle \operatorname{argmax}_{\theta \in \Theta} 
    \bar{\mathcal{J}}(\theta) \quad \text{with} \quad
    \bar{\mathcal{J}}(\theta) = \mathbb{E} \left( 
    \sum_{k=0}^{n-1} \psi_k\big(q_k(S_{t_k}, Q_k^\theta, \theta), S_{t_k} \big) \right).
\end{equation}
The function $\bar{\mathcal{J}}$, which corresponds to the reward function following a parametric strategy, is typically unknown, and we will employ stochastic optimization algorithms to maximize it. This initial optimization step will sometimes be referred to as a learning or training phase later on. We will detail the algorithms used in Section~\ref{training_part}.

After completing the learning of the parameter $\theta^*$, we implement a Monte Carlo estimator using the simulation of new scenarios, independent of those used in the stochastic optimization procedure. We then simulate $M_e$ paths $\big( S_{t_k}^{[m]} \big)_{0 \le k \le n-1, 1 \le m \le M_e}$
and the swing contract price is given by 
\begin{equation}
\label{fwd_est_price}
     \widehat{P}_0 = \frac{1}{M_{e}} \sum_{m = 1}^{M_{e}} 
     \sum_{k = 0}^{n-1} 
     \psi_k\Big(q_k \big(S_{t_k}^{[m]}, Q_k^{\theta^*}, {\theta^*}\big), S_{t_k}^{[m]} \Big),
\end{equation}
with the associated confidence interval. We refer to this second step as the evaluation phase.

\subsection{Explicit Payoff-Volume parameterization (\textit{PV strat})}
A first choice of parametric function $\chi_k$ introduced in \eqref{def_chi_k} is defined as follows. Let $\Theta = \mathcal{M}(n \times 3)$ be the set of all $n$-by-$3$ real matrices. Given $\theta \in \Theta$, for $0 \le k \le n-1$ we denote by $\theta_k = (\theta_{k1}, \theta_{k2}, \theta_{k3}) \in \mathbb{R}^3$ the $k+1$-th row of matrix $\theta$. At each timestep $t_k$, the parametric function $\chi_k$ is defined by 
\begin{equation} \label{def_chi_k_PV}
    \chi_k(S, Q, \theta) = \theta_{k1} (S - K) + \theta_{k2} \eta(Q) + \theta_{k3}, 
    \quad \text{with} \quad 
    \eta(Q) = \frac{Q-Q_{\min}}{Q_{\max}-Q_{\min}}.
\end{equation}
To condense this notation, we use the dot product in $\mathbb{R}^3$ and write $\chi_k(S, Q, \theta) = \big\langle \theta_k, I\big\rangle$ with $I = (S-K, \eta(Q), 1)$. The function $\eta$ computes the normalized remaining purchasing capacity. The choice of the function $\eta$ is discussed in Remarks \ref{rq1} and \ref{rq2}. In case $Q_{\min} = Q_{\max}$, one may use $\eta(Q) =\frac{Q-Q_{\min}}{Q_{\min}}$.

The idea behind the representation (2.5) is the following. We believe that the optimal decision/control depends on two key features: the payoff $S-K$ (short-term feature) and, the remaining purchasing capacity (long-term feature) due to global constraints. This remaining purchasing capacity is modelled by function $\eta$ which importance is discussed in Remark \ref{rq1}. We opt for this linear representation because in this setting, coefficients $\theta_{k1},\theta_{k2}$ carry a straightforward interpretation, denoting the relative significance of the corresponding feature (either the payoff $S-K$ or the remaining purchasing capacity $\eta(Q)$) in the explanation of optimal control. For instance, at an exercise date $t_k$, if the volume $Q$ satisfies the global constraints, then the optimal purchasing decision is mostly dependent on the payoff, meaning $\theta_{k1}$ needs to be high enough. Indeed, in this case, the optimal decision is to purchase the maximum possible volume in case the payoff is positive i.e. $S > K$ and the minimum possible volume otherwise. Conversely, if the global constraints are not met at that date, even with a negative payoff, one might be compelled to purchase the maximum allowed volume to adhere to the global constraints, emphasizing the importance of the feature $\eta(Q)$ to the detriment of the payoff. In summary, representation (2.5) provides a simple yet intuitive way to capture the heuristic described above. Besides, it allows to reduce the search space by imposing a particular shape that we believe to be optimal (refer to the previous section for details). Later on, we will see that this simple parametric function~\eqref{def_chi_k_PV} is flexible enough to yield good numerical results.

\begin{remark}[Importance of function $\eta$]
\label{rq1}
At first glance, the function $\eta$ may be seen as a simple normalization of the cumulative consumption. However, it should be noted that several other normalizations were tested and did not lead to a good strategy. We tested: $\eta(Q) = Q$, $\frac{Q - Q_{\min}}{q_{\max} - q_{\min}}$ and some normalizations depending on the step $k$, for example $\frac{k \cdot q_{\max} - Q}{q_{\max} - q_{\min}}$ and $\frac{Q - Q^{down}(t_{k+1})}{Q^{up}(t_{k+1}) - Q^{down}(t_{k+1})}$.
\end{remark}

\begin{remark}[Remaining capacity \textit{Versus} Cumulative Consumption]
\label{rq2}
We observed that the remaining purchasing capacity is a more crucial factor than the current cumulative consumption when determining the volume to buy at each exercise date. This contrasts with the neural network approach tested in \cite{BarreraEsteve2006NumericalMF}, which uses the current cumulative consumption. We have noticed that when using $\eta(Q) = Q$, our algorithm tended to purchase the maximum possible volume at the contract's outset, as long as the payoff was positive. This behaviour suggests that the algorithm did not take into account the fact that purchasing capacity diminishes as the contract approaches its expiration, due to global constraints.
\end{remark}

\begin{remark}[Complexity]
    \label{complexity_pv_strat}
    Using this parameterization, it is worth noting that the finite-dimensional optimization problem~\eqref{def_opt_pb} in the training phase corresponds to a space of dimensionality $3n$.
\end{remark}

\subsection{Neural Network parameterization (\emph{NN strat})}
\label{nn_params}

One way to improve the previous parameterization is to replace each parameter $\theta_k \in \mathbb{R}^3$ (in \emph{PV strat}) with the output of one artificial neural network that depend on some information available at time $t_k$. More precisely, we use a feedforward neural network (FNN) denoted by $\Phi$ and defined by the composition of affine functions and activation functions. Let us be more precise. Consider a network depth of $I \ge 1$ and $\underline{\mathrm{d}} := \min(d_1, \dots, d_{I}) $, where for $i=1,\ldots, I$, the positive integer $d_i$ denotes the number of neurons in the $i^{th}$ layer. The FNN $\Phi$, defined from $\mathbb{R}^d$ to $\mathbb{R}^p$ (in our case $p = 3$), is written as follows
\begin{equation} \label{def_fnn}
    x \in \mathbb{R}^d \mapsto\Phi_{\underline{\mathrm{d}}} (x; \theta) = 
    a_I^{\theta_I} \circ \phi \circ a_{I-1}^{\theta_{I-1}} 
    \circ \cdots \circ \phi \circ a_1^{\theta_1}(x) \in \mathbb{R}^p,
\end{equation}
where 
\begin{itemize}
\item $\theta := \big(\theta_1, \dots, \theta_I \big) \in \Theta := \prod_{i=1}^I \big(\mathcal{M}(d_i \times d_{i-1}) \times \mathbb{R}^{d_i}\big)$ with $d_0 = d$ and $d_I = p$.
\item for each layer $i=1,\dots,I$ the affine function $a_i^{\theta_i}:\mathbb{R}^{d_{i-1}} \to \mathbb{R}^{d_i}$ is defined by $a_i^{\theta_i}(x) = W_i x + b_i$ where $\theta_i = \big(W_i, b_i\big) \in \mathcal{M}(d_i \times d_{i-1}) \times \mathbb{R}^{d_i}$.
\item $\phi:\mathbb{R} \to \mathbb{R}$ is an activation function, which is applied component-wise to a given vector. In the numerical experiments, we only use ReLU (Rectified Linear Unit) function i.e. $x \mapsto \max(x, 0)$.
\end{itemize}

FNNs of form \eqref{def_fnn} have a powerful approximation capacity which is stated by the Universal Approximation Theorem \cite{Cybenko1989ApproximationBS, Hornik1989MultilayerFN}. Among others, a version of the latter theorem (from \cite{LapeyreLelong+2021+227+247}) is reproduced below (see Theorems \ref{uat1} and \ref{uat2}) for the reader convenience. Besides, some quantitative error bounds have been studied in the literature \cite{ATTALI19971069, baron_approx_2, baron_approx} with a quite general review in \cite{devore2020neural}. To state the universal approximation theorem, let us consider the following class of FNNs of form \eqref{def_fnn} with depth $I$ and activation function $\phi$,
\begin{equation}
\label{set_fnn}
    \mathcal{NN}_{I, \underline{d}} := \big\{\Phi_{\underline{\mathrm{d}}} (\cdot; \theta) : \mathbb{R}^d \to \mathbb{R}, \hspace{0.2cm} \theta \in \Theta \big\}
\end{equation}

\noindent
and denote by $\displaystyle \mathcal{NN}_{I, \infty} := \bigcup_{\underline{d} \ge 1} \mathcal{NN}_{I, \underline{d}}$. Then, we have the following theorems giving, at least theoretically, what feedforward neural networks can learn.

\begin{theorem}
\label{uat1}
    Assume that the activation function $\phi$ is non-constant and bounded. Let $\mu$ denote a probability measure on $\mathbb{R}^d$, then for any $I \ge 2$, $\mathcal{NN}_{I, \infty}$ is dense in the space $\mathbb{L}_{\mu}^2\big(\mathbb{R}^d\big)$ of squared $\mu$-integrable functions from $\mathbb{R}^d$ to $\mathbb{R}$.
\end{theorem}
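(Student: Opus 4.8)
The plan is to reduce the statement to the classical single-hidden-layer case $I=2$ and then run a Hilbert-space orthogonality argument combined with a Fourier (distributional) computation. First I would observe that, since $\phi$ is bounded and $\mu$ is a probability measure, every function $x \mapsto \phi(\langle w, x\rangle + b)$ lies in $\mathbb{L}^2_\mu(\mathbb{R}^d)$, so the linear span
\[
V := \operatorname{span}\big\{ x \mapsto \phi(\langle w, x\rangle + b) : w \in \mathbb{R}^d,\ b \in \mathbb{R} \big\} \subseteq \mathbb{L}^2_\mu(\mathbb{R}^d)
\]
is well defined, and $V$ together with the constant functions is precisely the set of scalar-output one-hidden-layer networks sitting in $\mathcal{NN}_{2,\infty}$. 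For any $I \ge 2$ it suffices to treat this case $I=2$, since the deeper classes contain it as their relevant building block, so I would concentrate entirely on showing $\overline{V} = \mathbb{L}^2_\mu(\mathbb{R}^d)$.

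Next I would exploit that $\mathbb{L}^2_\mu$ is a Hilbert space. If $\overline{V} \neq \mathbb{L}^2_\mu$, then by the projection theorem there exists $g \in \mathbb{L}^2_\mu$, $g \neq 0$, orthogonal to $V$, i.e.
\[
\int_{\mathbb{R}^d} \phi(\langle w, x\rangle + b)\, g(x)\, \mathrm{d}\mu(x) = 0 \qquad \text{for all } w \in \mathbb{R}^d,\ b \in \mathbb{R}.
\]
Because $\mu$ is finite we have $g \in \mathbb{L}^2_\mu \subseteq \mathbb{L}^1_\mu$, so $\nu := g\,\mu$ defines a nonzero finite signed measure on $\mathbb{R}^d$, and the orthogonality relation reads $\int \phi(\langle w, x\rangle + b)\,\mathrm{d}\nu(x) = 0$ for all $w,b$. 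The objective becomes to show that this forces $\nu = 0$, contradicting $g \neq 0$; equivalently, I must establish that a bounded non-constant $\phi$ is \emph{discriminatory}.

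The heart of the argument is a Fourier computation, and this is the step I expect to be the main obstacle. Fixing a direction $w$, let $\nu_w$ be the pushforward of $\nu$ under the projection $x \mapsto \langle w, x\rangle$, a finite signed measure on $\mathbb{R}$; the relation above says that the function $b \mapsto \int_{\mathbb{R}} \phi(s+b)\,\mathrm{d}\nu_w(s)$ vanishes identically, which upon taking Fourier transforms (in the tempered-distribution sense) becomes $\widehat{\phi}\cdot\widehat{\nu_w} = 0$, where $\widehat{\nu_w}$ is a bounded continuous function. Here is where boundedness and non-constancy enter: if $\operatorname{supp}(\widehat{\phi})$ were $\{0\}$ then $\phi$ would be a polynomial, hence constant since it is bounded, a contradiction; so some $\xi_0 \neq 0$ lies in $\operatorname{supp}(\widehat{\phi})$, and the product relation forces the continuous function $\widehat{\nu_w}$ to vanish at $\xi_0$. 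Rescaling $w \mapsto t w$, which only dilates $\nu_w$ and leaves the hypothesis valid for every $t$, propagates this zero along the whole ray through $\xi_0$; the conjugate symmetry of $\widehat{\nu_w}$ (a real signed measure) together with continuity at the origin then gives $\widehat{\nu_w} \equiv 0$, hence $\nu_w = 0$ for every $w$.

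Finally I would conclude via the projection--slice relation: vanishing of all one-dimensional projections $\nu_w$ yields $\widehat{\nu} \equiv 0$, and injectivity of the Fourier transform on finite signed measures gives $\nu = 0$, the contradiction sought. The routine parts are the measure-theoretic bookkeeping and the fact that $V$ (with constants) generates $\mathcal{NN}_{I,\infty}$; the delicate point is the distributional product step, namely converting \q{bounded and non-constant} into $\operatorname{supp}(\widehat{\phi}) \cap (\mathbb{R}^d \setminus \{0\}) \neq \emptyset$ and justifying the multiplication of the distribution $\widehat{\phi}$ by the smooth function $\widehat{\nu_w}$.
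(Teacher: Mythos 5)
The paper offers no proof to compare against: Theorem \ref{uat1} is reproduced verbatim from the literature (it is Hornik's universal approximation theorem for $\mathbb{L}^2_\mu$, quoted via \cite{LapeyreLelong+2021+227+247}), so your attempt must be judged on its own merits. Your overall route is the standard one for this result --- a Hilbert-space projection argument producing an annihilating signed measure $\nu = g\,\mu$, reduction to the one-dimensional pushforwards $\nu_w$, a Fourier argument forcing each $\nu_w = 0$, and Fourier uniqueness (Cram\'er--Wold) to conclude $\nu = 0$ --- and the projection step, the pushforward identity, the dilation trick propagating the zero of $\widehat{\nu_w}$ along the ray $\{t\xi_0\}$, and the final uniqueness step are all sound.

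There is, however, a genuine gap exactly at the step you flag as delicate, and it is more serious than you suggest. You write $\widehat{\phi}\cdot\widehat{\nu_w} = 0$ and then treat $\widehat{\nu_w}$ as a smooth function; it is not: the Fourier transform of a finite signed measure is only bounded and uniformly continuous, and smoothness would require moments of $\nu_w$, hence of $\mu$, which is an arbitrary probability measure. Since the product of a tempered distribution ($\widehat{\phi}$) with a merely continuous function is undefined, the displayed identity has no meaning as written, and the ensuing deduction that $\widehat{\nu_w}$ vanishes at some $\xi_0 \in \operatorname{supp}\widehat{\phi}\setminus\{0\}$ does not follow. What you actually need is the spectral-inclusion form of Wiener's Tauberian theorem: if $\phi \in L^\infty(\mathbb{R})$, $M$ is a finite signed measure and $\phi * M \equiv 0$, then $\widehat{M}$ vanishes on $\operatorname{supp}\widehat{\phi}$. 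This is a real theorem, not bookkeeping: at any point where $\widehat{M} \neq 0$, Wiener's $1/f$ lemma produces $h \in L^1$ with $\widehat{h} = 1/\widehat{M}$ locally, whence $\phi * g = 0$ for every Schwartz function $g$ whose Fourier transform is supported near that point, i.e.\ the point is not in $\operatorname{supp}\widehat{\phi}$. Inserting this lemma (applied to $M = \check{\nu}_w$) closes your main argument, with everything else kept as is. Two lesser gaps: (i) the reduction to $I = 2$ is asserted, not proved --- $\mathcal{NN}_{I,\infty}$ does not contain $\mathcal{NN}_{2,\infty}$ as a set; it can be repaired by applying the $I=2$ result under the image measure of $\mu$ by $\phi \circ a_1$ (a probability measure on $\mathbb{R}^{d_1}$) to approximate the outer affine map of a shallow network in $\mathbb{L}^2$ of that measure, then inducting on depth; (ii) your polynomial step uses that $\phi$ is not Lebesgue-a.e.\ constant, which is in fact the only reading of \q{non-constant} under which the theorem is true (for $\phi = \mathbf{1}_{\{0\}}$ and atomless $\mu$, every network is $\mu$-a.e.\ constant, so density fails).
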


\begin{theorem}
\label{uat2}
    Assume that the activation function $\phi$ is a non constant, bounded and continuous function, then, when $I = 2$, $\mathcal{NN}_{I, \infty}$ is dense into the space of continuous functions from $\mathbb{R}^d$ to $\mathbb{R}$ for the topology of the uniform convergence on compact sets.
\end{theorem}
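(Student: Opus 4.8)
The plan is to prove Theorem \ref{uat2} via the classical duality argument of Cybenko, combined with Hornik's observation that boundedness and non-constancy alone suffice. Fix a compact set $K \subset \mathbb{R}^d$. When $I = 2$ an element of $\mathcal{NN}_{2, \infty}$ restricted to $K$ is a finite sum $x \mapsto \sum_{j} c_j\, \phi(\langle w_j, x\rangle + b_j) + c_0$; these span a linear subspace $\mathcal{S} \subseteq C(K)$ (the bias of the output layer supplies the constants $c_0$). I would first reduce the density of $\mathcal{S}$ for the sup-norm to a statement about measures: by the Hahn--Banach theorem $\mathcal{S}$ is dense in $\big(C(K), \|\cdot\|_\infty\big)$ if and only if every continuous linear functional vanishing on $\mathcal{S}$ is zero, and by the Riesz representation theorem such functionals are exactly integration against finite signed regular Borel measures $\mu$ on $K$. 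Thus it suffices to show that $\phi$ is \emph{discriminatory}: if $\int_K \phi(\langle w, x\rangle + b)\, d\mu(x) = 0$ for all $w \in \mathbb{R}^d$ and $b \in \mathbb{R}$ (and $\int_K d\mu = 0$, coming from the constants), then $\mu = 0$.

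Next I would reduce the $d$-dimensional problem to a one-dimensional one. For a fixed direction $w$, let $\mu_w$ be the pushforward of $\mu$ under the map $x \mapsto \langle w, x\rangle$, a finite signed measure on $\mathbb{R}$; the hypothesis rewrites as $\int_\mathbb{R} \phi(s + b)\, d\mu_w(s) = 0$ for every $b \in \mathbb{R}$. If I can establish that this forces $\mu_w = 0$ for every $w$, then every one-dimensional projection of $\mu$ vanishes. Since the Fourier transform of $\mu$ satisfies $\widehat{\mu}(t w) = \widehat{\mu_w}(t)$, this yields $\widehat{\mu} \equiv 0$ and hence $\mu = 0$ by uniqueness of the Fourier transform of finite measures (the Cram\'er--Wold device).

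The heart of the argument is therefore the one-dimensional claim, and this is where I expect the main difficulty. The identity $\int_\mathbb{R} \phi(s+b)\, d\mu_w(s) = 0$ says that a translation-invariant convolution of $\phi$ against (a reflection of) $\mu_w$ vanishes identically in $b$. Smoothing by Fubini against any $g \in C_c^\infty(\mathbb{R})$ gives $\int_\mathbb{R} (\phi * \check{g})(s)\, d\mu_w(s) = 0$, which lets me work with a smooth representative. Passing to Fourier transforms in the tempered-distribution sense converts the convolution identity into the product relation $\widehat{\phi}\cdot \widehat{\mu_w} = 0$. The delicate points are that $\phi$ is only bounded, so $\widehat{\phi}$ exists merely as a tempered distribution, and that the non-constancy of $\phi$ must be turned into the statement that the support of $\widehat{\phi}$ is not reduced to $\{0\}$ (a constant is exactly a function whose transform is a multiple of $\delta_0$). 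Once a frequency $\xi_0 \neq 0$ in the support of $\widehat{\phi}$ is exhibited, the continuity of $\widehat{\mu_w}$ forces it to vanish near $\xi_0$; rescaling $w$ moves $\xi_0$ across all nonzero frequencies, while the constraint $\int_K d\mu = 0$ disposes of the frequency $0$. This gives $\widehat{\mu_w} \equiv 0$, hence $\mu_w = 0$, closing the chain.

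An alternative route, which avoids distribution theory, is Hornik's: use the Stone--Weierstrass theorem to reduce density to approximating the functions $x \mapsto \cos(\langle w, x\rangle + b)$, and then reduce that in turn to approximating $\cos$ and $\sin$ on compact intervals by finite combinations $\sum_j c_j\, \phi(a_j t + b_j)$. Either way the crux is the same one-dimensional approximation lemma for a bounded, continuous, non-constant $\phi$, and I would regard establishing it rigorously --- rather than the softer functional-analytic reductions --- as the main obstacle.
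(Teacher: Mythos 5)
The paper itself offers no proof of Theorem \ref{uat2}: it is quoted verbatim from the literature (Cybenko, Hornik, via the version in Lapeyre--Lelong) \q{for the reader convenience}, so there is no internal argument to compare yours against. Judged on its own merits, your reconstruction is the standard and correct route --- Hahn--Banach plus Riesz representation to reduce density to the discriminatory property, pushforward and Cram\'er--Wold to reduce to dimension one, and a tempered-distribution Fourier argument for the one-dimensional crux --- and all the essential ideas needed to close the argument are present: the convolution identity, the product relation $\widehat{\phi}\cdot\widehat{\mu_w}=0$ (legitimate because $\mu_w$ is compactly supported, so $\widehat{\mu_w}$ is smooth with all derivatives bounded), and the rescaling of $w$ to sweep the fixed nonzero frequency $\xi_0$ over all of $\mathbb{R}$.

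Two points in your sketch are stated imprecisely, though both are repairable in a line or two. First, \q{a constant is exactly a function whose transform is a multiple of $\delta_0$} is not quite right: a tempered distribution supported in $\{0\}$ is a finite linear combination of $\delta_0$ and its derivatives, hence $\phi$ would be a \emph{polynomial}; it is the boundedness of $\phi$ that then upgrades \q{polynomial} to \q{constant}, contradicting non-constancy. This is exactly where both hypotheses (bounded \emph{and} non-constant) enter, and the argument fails without boundedness (as it must, since unbounded polynomials are not discriminatory). Second, \q{the continuity of $\widehat{\mu_w}$ forces it to vanish near $\xi_0$} is backwards: the relation $\widehat{\phi}\cdot\widehat{\mu_w}=0$ forces $\widehat{\mu_w}$ to vanish \emph{on} $\operatorname{supp}\widehat{\phi}$ (a distribution annihilated by multiplication against a continuous function vanishes on the open set where that function is nonzero), so you get $\widehat{\mu_w}(\xi_0)=0$ pointwise, not on a neighbourhood. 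That pointwise vanishing is all you need: applying the hypothesis to $\lambda w$ for every $\lambda\in\mathbb{R}$ gives $\widehat{\mu}(\xi_0\lambda w)=0$, and letting $w$ and $\lambda$ vary covers every frequency, whence $\mu=0$. With these two corrections your plan is a complete and rigorous proof, essentially Hornik's; the alternative Stone--Weierstrass route you mention would work equally well and is the one usually cited for this exact statement.
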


These theorems state that FNNs of form \eqref{def_fnn} with one hidden layer can approximate a wide class of functions for a given accuracy, whence their characterization as \emph{universal approximators}.

\vspace{0.2cm}
In this second approach based on neural networks, we replace each $\theta_k = (\theta_{k1}, \theta_{k2}, \theta_{k3}) \in \mathbb{R}^3$ of~\eqref{def_chi_k_PV} by the output of a $\mathbb{R}^3$-valued feedforward neural network $\Phi:\mathbb{R}^d \to \mathbb{R}^3$ of form~\eqref{def_fnn}. More precisely, we consider 
\begin{equation} \label{def_chi_k_NN}
    \chi_k(S, Q, \theta) = \big\langle \Phi(t_k, S-K, \eta(Q); \theta), I \big\rangle,
    \quad \text{with} \quad 
    I = (S-K, \eta(Q), 1),
\end{equation}
and $\eta(Q) = \frac{Q-Q_{\min}}{Q_{\max}-Q_{\min}}$. We represent in Figure~\ref{nn_representation} an illustration of the architecture of the neural network.

\vspace{0.1cm}
The concept driving this modelling approach remains centered on modelling the contributions of specific features, as in \eqref{def_chi_k_PV}. However, in this iteration, these contributions are modelled as output of a neural network. This approach enables the integration of non-linearity in the modelling of these contributions, contrasting with \eqref{def_chi_k_PV}, thus enlarging the spectrum of functions to explore. Besides, it is worth noting that this modelling into two steps is mainly due to the fact that we have a priori knowledge of the optimal control's shape: the optimal strategy is bang-bang and we use a sigmoid function as a smooth approximation. Without such prior knowledge, it would be more relevant to model the optimal control using a real-valued neural network. However, in our specific setting where we have a priori knowledge of the optimal control's shape, it turns out that our modelling is more effective than directly modelling the optimal control as the output of a real-valued neural network. The latter means that, given a small number of iterations (1000 iterations in our experiments), our method achieves a significantly better maximum (i.e. higher prices). This is probably because, learning the optimal control from scratch is likely to take more time than starting from a particular shape based on prior knowledge.

\begin{figure}[!h]
    \center
    \includegraphics[scale=0.4]{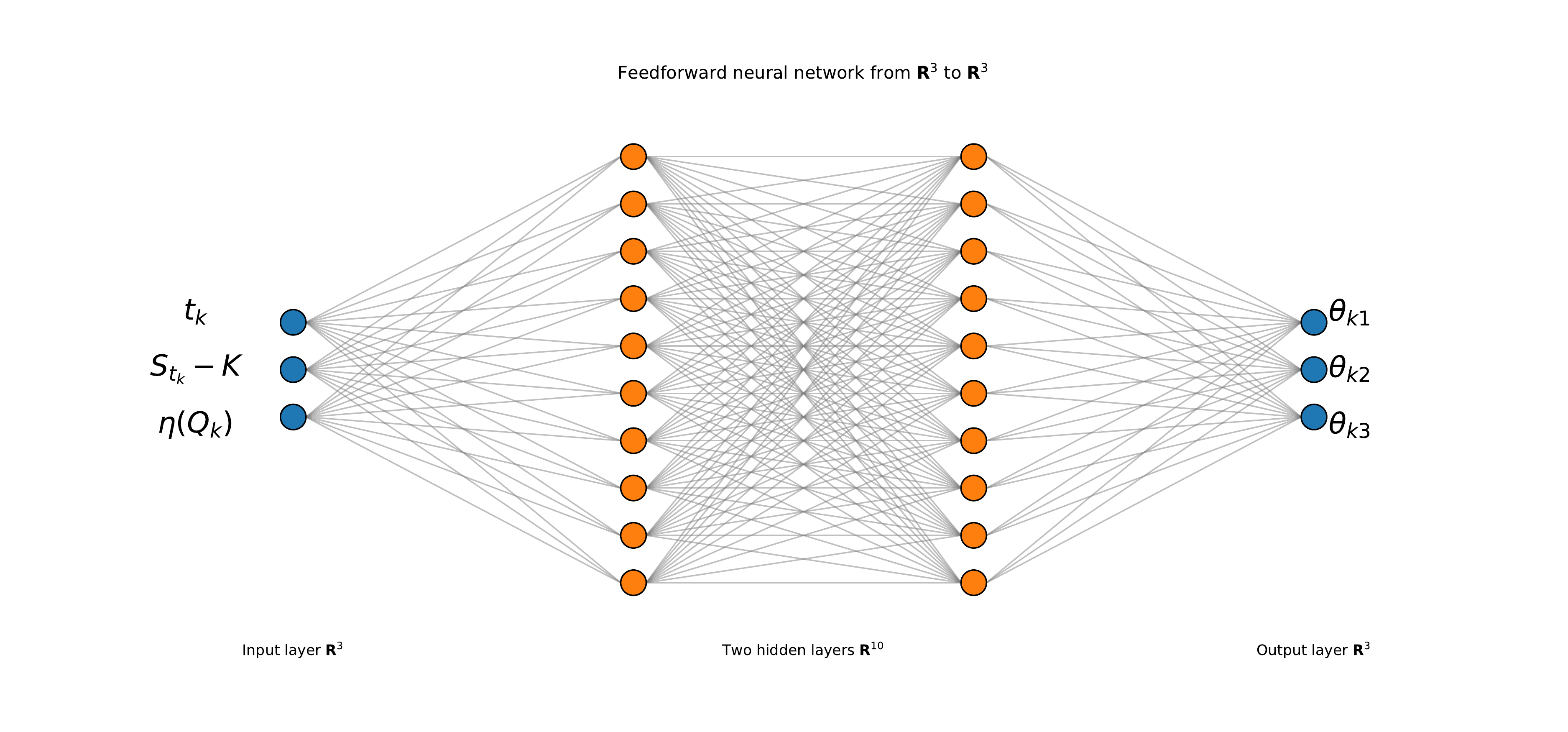}
    \caption{Illustration of feedforward neural network architecture.}
    \label{nn_representation}
\end{figure}

\begin{remark}[Complexity]
    With this parameterization, the parameter space involved in training phase (see ~\eqref{def_opt_pb}) has a dimension equal to $\sum_{i = 1}^{I} d_id_{i-1} + d_i$ which does not depend on the number of exercise dates $n$, unlike the preceding parameterization PV strat where the parameter space dimension increases linearly with the number of exercise dates (as pointed out in Remark \ref{complexity_pv_strat}).
\end{remark}

\section{Stochastic optimization for learning phase}
\label{training_part}

In this section, we will review some recent and efficient stochastic optimization algorithms for solving the high-dimensional problem~\eqref{def_opt_pb}. To establish our notation, let's start by revisiting stochastic gradient ascent for maximizing the deterministic function $\bar{\mathcal{J}}$, which can be expressed as an expectation $\bar{\mathcal{J}}(\theta) = \mathbb{E}\big[ J(\theta, Z) \big]$ with $J(\theta, Z) = \sum_{k=0}^{n-1} \psi_k\big(q_k(S_{t_k}, Q_k^\theta, \theta), S_{t_k} \big)$ and the notation $Z := \big(S_{t_0}, \ldots, S_{t_{n-1}}\big)$. Adhering to the standard notations of stochastic approximation (refer to~\cite{Kushner2003StochasticAA, Robbins2007ASA}), let us define $H:\Theta \times \mathbb{R}^{n}$ as the gradient of $-J$ with respect to $\theta$. We aim to find a root of $h(\theta) = \mathbb{E}\big[H(\theta, Z)\big]$. In practice, we employ automatic differentiation to compute the local gradient $H$
\begin{equation}
\label{gradient_H}
    H(\theta, Z) = -\nabla_\theta J(\theta, Z) = 
    -\sum_{k=0}^{n-1} \nabla_\theta \psi_k\big(q_k(S_{t_k}, Q_k^\theta, \theta), S_{t_k} \big) \hspace{0.4cm}  \text{with} \hspace{0.4cm} Z:= \big(S_{t_0}, \ldots, S_{t_{n-1}} \big).
\end{equation}
The standard stochastic gradient descent to approximate $\theta^*$ a zero of $h$ writes then 
\begin{equation}
    \label{base_sgd_iter}
    \theta_{k+1} = \theta_k - \gamma_{k+1} H\big(\theta_k, Z^{[k+1]}\big), 
\end{equation}
with $(Z^{[k]})_{k\ge 1}$ an \emph{i.i.d.} sequence of copies of $Z = (S_{t_0},\dots, S_{t_{n-1}})$ and $(\gamma_n)_{n \ge 0}$ a real sequence decreasing to 0 satisfying the two conditions 
\begin{equation}
    \label{hyp_DS}
    \sum_{n\ge0}^{} \gamma_n = +\infty \hspace{0.3cm} \text{and} \hspace{0.2cm} \sum_{n\ge0}^{} \gamma_n^2 < +\infty.
\end{equation}
The sequence $(\gamma_n)_{n \in \mathbb{N}}$ is called the algorithm step or learning rate and has to be chosen carefully. It is often taken decreasing by plateau. Practitioners often use the mini-batch version of \eqref{base_sgd_iter}. That is, the local gradient in \eqref{base_sgd_iter} computed over a single simulation is replaced with an average over a small number $B$ of simulations i.e.,
\begin{equation}
    \label{base_sgd_mini_batch_iter}
    \theta_{k+1} = \theta_k - \frac{\gamma_{k+1}}{B} \sum_{b = 1}^{B} H\big(\theta_k, Z^{[k+1], b}\big), 
\end{equation}

\noindent
with $(Z^{[k], b})_{k\ge 1, 1 \le b \le B}$ an \emph{i.i.d.} sequence of copies of $Z = (S_{t_0},\dots, S_{t_{n-1}})$.

\vspace{0.2cm}
The convergence of SGD had been widely studied in the literature. For Lipschitz continuous convex objective functions, some convergence results may be found in \cite{nemirovskiĭ1983problem, Zinkevich2003OnlineCP}. When both smoothness and strong convexity of the objective function hold, a convergence analysis is provided in \cite{NIPS2011_40008b9a}. We refer the reader to \cite{JMLR:v21:19-636, Lei2019StochasticGD} for the non-convex setting and to  \cite{Shamir2012StochasticGD} for non-smooth case.

Several different variants of the standard procedure \eqref{base_sgd_iter} have been developed. In this paper, we use and compare two of them namely, the Adaptive Moment Estimation (Adam) algorithm and the Preconditioned Stochastic Gradient Langevin Dynamics (PSGLD) algorithm.

\begin{remark}
    Hereafter, we will keep the learning rate constant over the training phase i.e. $\gamma_k = \gamma$ as often set by practitioners.
\end{remark}

\subsection{Adaptive Moment Estimation (Adam)}

\indent
Adam \cite{Kingma2015AdamAM} is an efficient stochastic optimization method that only requires first-order gradients with little memory requirement. The method computes individual adaptive learning rates for
different parameters from estimates of first gradient moments. Practically, this is done by means of a preconditioning matrix $P$ (see updating \eqref{adam_iter}). Adam can be seen as a combination of RMSprop \footnote{Unpublished optimization algorithm designed for neural networks, first proposed by Geoffrey Hinton in a Coursera course \url{https://www.cs.toronto.edu/~tijmen/csc321/slides/lecture_slides_lec6.pdf}} and Stochastic Gradient Descent with momentum \cite{sgdMoment}. It replaces the gradient in the stochastic procedure \eqref{base_sgd_iter} with the (scaled) moving average of past gradients values until the considered iteration. The procedure is the following:
\begin{equation}
\label{adam_iter}
\theta_{n+1} = \theta_n - \gamma P_{n+1} \cdot \widehat{M_{n+1}},
\end{equation}

\noindent
where $\widehat{M_{n+1}} $ is a scaled moving average of local gradients $H$ \eqref{gradient_H} and $(P_{n})_n$ is a sequence of preconditioning (diagonal) matrix where diagonal entries are inverse of the scaled moving average of squares of gradients $H$ until iteration $n$ (see Algorithm \ref{algo_adam}). In Algorithm \ref{algo_adam} (and also in Algorithm \ref{algo_psgld}), we adopt the following notations. Given a gradient $g = \big(g_1, \ldots, g_d \big) \in \mathbb{R}^d$, the square component-by-component of $g$ denoted $g \odot g$ is defined by:
\begin{equation}
\label{sq_elt_grad}
g \odot g = \big(g_1^2, \ldots, g_d^2\big).
\end{equation}

\noindent
In the same spirit, for some matrix $A = \big(a_{i, j} \big)_{1 \le i, j \le d}, B = \big(b_{i, j} \big)_{1 \le i, j \le d}$ where $b_{i, j} \neq 0$ for all $1 \le i, j \le d$, we define a component-by-component division as follows,
\begin{equation}
\label{div_elt_grad}
A \oslash B = \big( c_{i, j} \big)_{1 \le i, j \le d} \hspace{0.6cm} \text{where} \hspace{0.2cm} c_{i, j} = a_{i, j} / b_{i, j}.
\end{equation}

\begin{algorithm}
\footnotesize
    \SetKwFunction{isOddNumber}{isOddNumber}
    \SetKwInOut{KwIn}{Input}
    \SetKwInOut{KwOut}{Output}
    \SetKwRepeat{Do}{do}{while}%

    \KwIn{
        \begin{itemize}[noitemsep]
            \item Step size $\gamma$.
            \item $\mu_1, \mu_2 \in [0, 1)$: exponential decay rates for the moment estimates.
            \item Initialize parameter vector $\theta_0$.
        \end{itemize}
    }

    \KwOut{Optimal parameter $\theta^{*}$.}

    \vspace{0.2cm}

    \tcc{Initialize the moving average for the gradient and its square.}

    $M_0 = 0$ (first moment vector), $MS_0 = 0$ (second moment vector).
    
    \vspace{0.2cm}
    
    \While{$\theta_n$ not converged}{
            $\widehat{g}_{n+1} \leftarrow H(\theta_n, Z^{[n+1]}).$ \tcc{Estimate of gradient using \eqref{gradient_H} and automatic differentiation.}

            \vspace{0.2cm}

            $M_{n+1} \leftarrow \mu_1 \cdot M_n + (1 - \mu_1) \cdot \widehat{g}_{n+1}.$ \tcc{Update biased first moment estimate.}

            \vspace{0.2cm}
            
            $MS_{n+1} \leftarrow \mu_2 \cdot MS_n + (1 - \mu_2) \cdot \widehat{g}_{n+1} \odot \widehat{g}_{n+1}.$ \tcc{Update biased second raw moment estimate.}

            \vspace{0.2cm}
            
            $\widehat{M}_{n+1} \leftarrow  M_n / (1 - \mu_1^{n+1}).$ \tcc{Compute bias-corrected first moment estimate.}

            \vspace{0.2cm}
            
            $\widehat{MS}_{n+1} \leftarrow  MS_n / (1 - \mu_2^{n+1}).$ \tcc{Compute bias-corrected second raw moment estimate.}
            
            \vspace{0.2cm}
            
            $P_{n+1} \leftarrow \diag\Big(Id_q \oslash \big(\lambda \cdot Id_d + \sqrt{\widehat{MS}_{n+1}} \big) \Big).$ \tcc{Compute preconditioning matrix.}
            
            \vspace{0.2cm}
            
            $\theta_{n+1} \leftarrow \theta_n - \gamma P_{n+1} \cdot \widehat{M}_{n+1}.$ \tcc{Update parameters using \eqref{base_sgd_iter} or \eqref{base_sgd_mini_batch_iter}.}
        }

    \KwRet{$\theta_n$.}
    \caption{Adam updating. $\lambda$ is a small correction term to avoid division by zero. $Id_d$ denotes the $d \times d$ identity matrix.}

\label{algo_adam}
\end{algorithm}

It is worth noting that, in case of mini-batches, a loop over all mini-batches has to be added in Algorithm \ref{algo_adam} and inside the \emph{while} loop. Adam uses the squared gradients to scale the learning rate like RMSprop and it takes advantage of momentum by using the moving average (driven by $\mu_1, \mu_2$) of the gradient instead of the gradient itself like SGD with momentum. The scaling of the learning rate by the squared gradient allows to solve gradient magnitude problem. Otherwise, sometimes the gradients may be huge and sometimes small, which intricate the choice of a learning rate. The exponential moving average of the gradient allows to \q{denoise} the estimation of the gradient.

\vspace{0.2cm}
Adam has become the default algorithm for stochastic optimization and especially when it comes to train a neural network. Its theoretical convergence had been studied in the literature by its authors \cite{Kingma2015AdamAM} in the convex setting and by other authors in the non-convex setting \cite{Defossez2020ASC, Zou2018ASC}. In this paper, we do not limit ourselves to Adam updating. We propose to explore another algorithm based on Langevin dynamics and which shares, as Adam, good properties when it comes to escape traps. By traps we mean an unstable or repelling critical points  of the loss function to be minimized.

\subsection{Preconditioned Stochastic Gradient Langevin Dynamics (PSGLD)}
Stochastic Gradient Langevin Dynamics (SGLD) has been introduced by Gelfand and Mitter in \cite{GelMit} to minimize a potential $J$ and more recently brought back to light for Bayesian learning~\cite{Welling2011BayesianLV} to approximate a posterior distribution given some data.  The paradigm of SGLD is to inject in Stochastic Gradient  Descent an exogenous  (Gaussian) noise in additive way at each update of the procedure, so that it makes it converge toward an invariant distribution almost concentrated on argmin$\,J$ or, in its simulated annealing version, to converge in probability toward argmin $\,J$. The use of Langevin based optimization algorithms is relatively new in the context of stochastic control problems and had demonstrated its effectiveness in practice \cite{PierreLangevin}. We refer the reader to Appendix \ref{langevin_optim} for theoretical details on the link between Langevin dynamics and optimization problems.

\vspace{0.2cm}
We keep the same notation as in Section \ref{training_part}. Recall that we aim to maximize the function $\bar{\mathcal{J}}(\theta) = \mathbb{E}[J(\theta, Z)]$ and we consider $H : \Theta \times \mathbb{R}^n$ as the gradient of $-J$ with respect to $\theta$ (see \eqref{gradient_H}). After preliminary numerical tests (following see~\cite{bras2022langevin}), we finally implemented  the preconditioned Langevin procedure from the RMSprop procedure (see \cite{Li2015PreconditionedSG}) as follows,
\begin{equation}
    \label{psgld_iter}
    \theta_{n+1} = \theta_n - \gamma P_{n+1} \cdot H\big(\theta_n, Z^{[n+1]}\big) + \sigma \sqrt{\gamma} P_{n+1} \cdot \eta_{n+1},
\end{equation}

\noindent
where,

$\rhd$ $(Z^{[n]})_n$ is a sequence of \emph{i.i.d.} copies of $Z = \big(S_{t_0}, \ldots, S_{t_{n-1}}\big)$.

$\rhd$ $(\eta_n)_n$ is a sequence of \emph{i.i.d.} copies of $\eta \sim \mathcal{N}\big(0, I_d \big)$.

$\rhd$ $(P_n)_n$ is a sequence of diagonal $d \times d$ matrix called \emph{preconditioners} (as in Adam \eqref{adam_iter}) and which entries depend on a moving average of the gradient past values.

\vspace{0.2cm}
The pseudo-code of the preconditioned stochastic gradient langevin dynamic is presented in Algorithm \ref{algo_psgld}, where we consider a constant step setting i.e. $\gamma$ and $\sigma$. From a practical point of view the injection of noise helps the algorithm to escape from traps (see \cite{addgaussnoise}). Besides the convergence of Langevin algorithm has been widely studied. One may refer to \cite{langcvg, durmus2018highdimensional} when the sequence $(\sigma_n)_n$ is constant i.e. $\sigma_n = \sigma$ and to \cite{bras:hal-03891234,Pags2020UnadjustedLA} when the sequence is decreasing.

\begin{algorithm}
\footnotesize
    \SetKwFunction{isOddNumber}{isOddNumber}
    \SetKwInOut{KwIn}{Input}
    \SetKwInOut{KwOut}{Output}
    \SetKwRepeat{Do}{do}{while}%

    \KwIn{
        \begin{itemize}[noitemsep]
            \item Step size $\gamma$.
            \item $\beta \in [0,1)$: exponential decay rate for the moment squared estimates
            \item Initialize parameter vector $\theta_0$.
        \end{itemize}
    }

    \KwOut{Optimal parameter $\theta^{*}$.}

    \vspace{0.2cm}

    \tcc{Initialize the square moving average of the gradient.}

    $MS_0 = 0$ (second moment vector).
    
    \vspace{0.2cm}
    
    \While{$\theta_n$ not converged}{
    \tcc{Compute local gradient using \eqref{gradient_H} and automatic differentiation}
      $\hat{g}_{n+1} \gets H(\theta_n, Z^{[n+1]})$ 

      \tcc{Update biased second raw moment estimate}
      $MS_{n+1} = \beta \cdot MS_n + (1 - \beta) \cdot g_{n+1} \odot g_{n+1}$ 

      \tcc{Compute preconditioner (diagonal) matrix}
      $P_{n+1} \leftarrow \diag \Big(Id_d \oslash \big(\lambda \cdot Id_d + \sqrt{MS_{n+1}} \big)  \Big)$

       \tcc{Update parameters}
      $\theta_{n+1}\gets \theta_n - \gamma P_{n+1}\cdot g_{n+1} +\sigma \sqrt{\gamma}\cdot 
      \mathcal{N}\big(0,\,(P_{n+1}P^{\top}_{n+1})\big)$ 
        }

    \KwRet{$\theta_n$.}
    \caption{PSGLD updating. $\lambda$ is a small correction term to avoid division by zero. $Id_d$ denotes the $d \times d$ identity matrix.}

\label{algo_psgld}
\end{algorithm}

\subsection{Adam \textit{versus} PSGLD}
\label{compar_algo}

\indent

We implement \textit{PV strat} and \textit{NN strat} with both Adam and PSGLD optimization algorithms and in the \textit{case 1} setting. This has been done using \emph{PyTorch} toolbox and with \emph{Nvidia A100-PCIE 40GB} as GPU device. The code can be found using this link: \url{https://github.com/ChristianYeo/swing_parametric_strategies}. As a benchmark, we have implemented the Longstaff-Schwartz method with canonical polynomial functions of degree 3 yielding a price equal to 65.14 ($IC_{95 \%}: [65.08, 65.21]$). The average computation time for the Longstaff-Schwartz method is about 4 minutes when using $10^6$ simulations.

In our experiments, we noticed that with \textit{case 1} (even with \textit{case 2}) setting, regardless the chosen method, the Monte Carlo estimator of the swing price \eqref{fwd_est_price} exhibits a high variance. Thus, for the sake of robust comparison, we adopt the following protocol. For all tables in this section, swing prices are computed by performing \emph{100 times} training phase then evaluation phase (as explained in Section \ref{sec2}) and the 100 resulting price estimators \eqref{fwd_est_price} are averaged. For each of theses 100 replications, the training phase is performed using one batch of size $B = 2^{14} $, a learning rate $\gamma = 0.1$ and in the evaluation phase, we use $M_{e} = 10^6$ simulations. 

\vspace{0.2cm}

In what follows, we denote by $N$ the number of iterations in the stochastic procedure \eqref{base_sgd_mini_batch_iter}, by $\widehat{P_0}$ the (raw) price resulting from \textit{PV strat} or \textit{NN strat} and given by \eqref{fwd_est_price}. $\widehat{P_0}^{BG}$ denotes the price given by forcing the consumption rule (either \eqref{def_chi_k_PV} for \textit{PV strat} or \eqref{def_chi_k_NN} for \textit{NN strat}) to be strictly bang-bang. That is, we shrink the logistic function $\sigma$ in \eqref{q_parameteric} i.e.,
\begin{equation}
    \label{cons_rule_bg_bg}
    \bar{\sigma}(\chi) := \mathbf{1}_{\{\chi \ge 0\}} \in \{0, 1\},
\end{equation}

\noindent
where $\chi$ denotes the decision function (either \eqref{def_chi_k_PV} or \eqref{def_chi_k_NN}). Results are recorded in Tables \ref{results_explicit_param_comp} and \ref{results_psgld_explicit_param_comp} for \textit{PV strat} and in Tables \ref{results_psgld_explicit_param_comp} and \ref{results_psgld_nn_param_comp} for \textit{NN strat}.

\begin{table}[ht!]
\centering
\begin{tblr}{colspec={c},hlines}
\hline
     $N$ &  $\widehat{P_0}$ &  $\widehat{P_0}^{BG}$ & Time (s) \\
     \hline
     3000 & 65.11 ([65.04, 65.18])& 65.15 ([65.08, 65.22]) & 68.4\\
     1000 & 65.02 ([64.95, 65.08])& 65.18 ([65.12, 65.24]) & 21.3\\
\end{tblr}
\caption{Results for \textit{PV strat} using Adam optimization algorithm. A confidence interval (95\%) is in brackets. Column \q{Time} includes both training and valuation times.}
\label{results_explicit_param_comp}
\end{table}

\begin{table}[ht!]
    \centering
\begin{tblr}{colspec={c},hlines}
\hline
     $\sigma$ & $\beta$ & $\widehat{P_0}$ & $\widehat{P_0}^{BG}$ & Time (s) \\
     \hline
     $1 \cdot e^{-6}$ & 0.8  & 65.19 ([65.12, 65.26])& 65.21 ([65.13, 65.28]) & 21.3\\
     $1 \cdot e^{-6}$ & 0.9  & 65.21 ([65.15, 65.28])& 65.23 ([65.17, 65.29]) & 21.3\\
     $1 \cdot e^{-5}$ & 0.9  & 65.22 ([65.16, 65.28])& 65.23 ([65.18, 65.29]) & 21.6\\
\end{tblr}
\caption{Results for \textit{PV strat} using PSGLD optimization algorithm. A confidence interval (95\%) is in brackets. Column \q{Time} includes both training and valuation times. We used $N = 1000$ iterations and $\lambda = 1\cdot e^{-10}$.}
\label{results_psgld_explicit_param_comp}
\end{table}

Tables \ref{results_explicit_param_comp} and \ref{results_psgld_explicit_param_comp} suggest that Adam requires several iterations and, as a result, increases computation time when using a learning rate of 0.1. Whereas, with only $N = 1000$ iterations, using PSGLD optimization algorithm yields a better price. This fact is also observable in \textit{NN strat} (see Tables \ref{results_nn_param_1_comp} and \ref{results_psgld_nn_param_comp}). It seems that adding some noise in the optimization procedure helps to converge quickly.

\begin{table}[ht!]
    \centering
\begin{tblr}{colspec={c},hlines}
\hline
     $N$ &  $\widehat{P_0}$ & $\widehat{P_0}^{BG}$ & Time (s) \\
     \hline
     3000 & 65.22 ([65.16, 65.28])& 65.23 ([65.17, 65.29]) & 150.4\\
     1000 & 65.13 ([65.05, 65.20])& 65.22 ([65.15, 65.29]) & 53.2\\
\end{tblr}
\caption{Results for \textit{NN strat} using Adam optimization algorithm. For the neural network architecture, we used 2 hidden layers ($I = 2$) with 10 units per layer ($q_1 = 10, q_2 = 10$). Values in brackets are confidence intervals (95\%). Columns \q{Time} includes the training and the valuation times.}
\label{results_nn_param_1_comp}
\end{table}

\begin{table}[ht!]
    \centering
\begin{tblr}{colspec={c},hlines}
\hline
     $\sigma$ & $\beta$ & $\widehat{P_0}$ & $\widehat{P_0}^{BG}$ & Time (s) \\
     \hline
      $1 \cdot e^{-6}$ & 0.9 & 65.23 ([65.16, 65.30])& 65.21 ([65.14, 65.28]) & 53.4\\
     $1 \cdot e^{-5}$ & 0.9 & 65.23 ([65.14, 65.31])& 65.22 ([65.13, 65.30]) & 52.3\\
     $1 \cdot e^{-6}$ & 0.8 & 65.27 ([65.20, 65.35])& 65.26 ([65.18, 65.33]) & 51.3\\
\end{tblr}
\caption{Results for \textit{NN strat} using PSGLD optimization algorithm. For the neural network architecture, we used 2 hidden layers ($I = 2$) with 10 units per layer ($q_1 = 10, q_2 = 10$). Values in brackets are confidence intervals (95\%). Columns \q{Time} includes the training and the valuation times. We used $N = 1000$ iterations and $\lambda = 1\cdot e^{-10}$.}
\label{results_psgld_nn_param_comp}
\end{table}

Following the performance of PSGLD updating, we consider this algorithm in the remainder along with the following configuration: $\sigma = 1 \cdot e^{-6}, \beta = 0.8, \lambda = 1 \cdot e^{-10}$. In the latter setting, one may compute sensitivity of the swing price with respect to the initial forward price (see Figure \ref{deltas_forward}).

\begin{figure}[ht]
\centering

\begin{tikzpicture}
\begin{axis}[
	xlabel=Exercise dates,
	ylabel=Delta,
	xmin=0,
	ymin=4,
	grid=both,
	minor grid style={gray!25},
	major grid style={gray!25},
	width=0.5\linewidth,
	height=0.2\paperheight,
	%no marks,
	line width=0.2pt,
	mark size=1.5pt,
	mark options={solid},
	]
\addplot[color=black, mark=*] %
	table[x=time,y=expl,col sep=comma]{datas/greeks/data_30_days.csv};
\addlegendentry{\textit{PV strat}};
\addplot[color=blue, mark=*, style=densely dotted] %
	table[x=time,y=nn,col sep=comma]{datas/greeks/data_30_days.csv};
\addlegendentry{\textit{NN strat}};
\end{axis}
\end{tikzpicture}%
~
\begin{tikzpicture}
\begin{axis}[
	xlabel=Exercise dates,
	ylabel=Delta,
	xmin=0,
	ymin=4,
	grid=both,
	minor grid style={gray!25},
	major grid style={gray!25},
	width=0.5\linewidth,
	height=0.2\paperheight,
	%no marks,
	line width=0.2pt,
	mark size=0.5pt,
	mark options={solid},
	]
\addplot[color=black, mark=*] %
	table[x=time,y=expl,col sep=comma]{datas/greeks/data_365_days.csv};
\addlegendentry{\textit{PV strat}};
\addplot[color=blue, mark=*, style=densely dotted] %
	table[x=time,y=nn,col sep=comma]{datas/greeks/data_365_days.csv};
\addlegendentry{\textit{NN strat}};
\end{axis}
\end{tikzpicture}
\caption{Delta forward in the settings of case 1 (left) and case 2 (right) using \textit{PV strat} and \textit{NN strat}.}
\label{deltas_forward}
\end{figure}
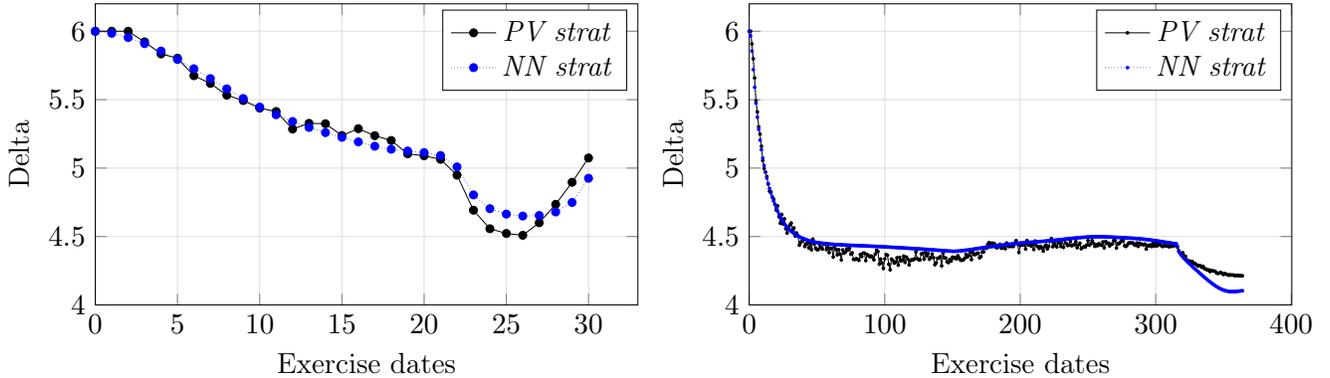

Note that for both optimization algorithms (Adam and PSGLD), different other hyperparameters have been tested and results are reported in Appendix \ref{summary_table_algo}.

\subsection{Practitioner's corner: transfer learning}
\label{transfer_learn}

\indent
We observed that, when the swing contract has several exercise dates or when we use more than one batch per iteration, the execution time increases drastically (see Appendix \ref{summary_table_algo}). To illustrate this point, let us consider a swing contract with an expiry in one year and a daily exercise right (thus 365 exercise dates). Alongside this contract, we consider the diffusion  model \eqref{hjm_model} for the underlying asset with the setting of \textit{case 2}. The results are reported in Table \ref{one_year_swing_baseline}. Notice that, as observed in \cite{BarreraEsteve2006NumericalMF}, the Longstaff-Schwartz method generates numerical instabilities when evaluating this contract at a daily granularity. Moreover, it is worth noting that, we get better prices than optimal quantization method like in \cite{Bardou2009OptimalQF}. 

\vspace{0.2cm}
For all tables in this section, we used one batch of size $B = 2^{14}, M_{e} = 10^8, \gamma = 0.1$ and $N = 1000$ iterations.

\begin{table}[ht]
    \centering
\begin{tblr}{colspec={c},hlines}
\hline
     &  $\widehat{P_0}$ & $\widehat{P_0}^{BG}$ & $(T_{train}, T_{e})$ \\
     \hline
     \textbf{\textit{PV strat}}& 2690.10 ([2689.20, 2691.00])& 2692.59 ([2691.68, 2693.49]) & (274.1, 74.7)\\
     \textbf{\textit{NN strat}} & 2693.74 ([2692.83, 2694.64])& 2694.12 ([2693.21, 2695.02]) & (680.9, 398.3) \\
\end{tblr}
\caption{Results for a one-year swing contract. A confidence interval (95\%) is in brackets. $T_{train}$ denotes the training time and $T_{e}$ the valuation time.}
\label{one_year_swing_baseline}
\end{table}

From Table \ref{one_year_swing_baseline} one may notice that the computation time is not negligible even if, compared to other methods (like Longstaff-Schwartz) it remains reasonable. To reduce it, we propose a method based on the so-called \textit{\textbf{transfer learning}} (for details see \cite{Pan2010ASO}). The latter refers to a machine learning method where a model developed for a task is reused as a starting point for a model on a second and similar task. This method may accelerate the training phase when the swing contract has several exercise dates. In the latter case, we proceed as follows. We first consider an aggregated version. That is, we consider a swing contract with less exercise dates and where the local constraints are aggregated. For instance for a swing contract over one year with daily exercise right and with local constraints $q_{\min} = 0, q_{\max} = 6$, we may consider a contract with one exercise per month (the middle of each month) and where for example, for months with 30 days, the local constraints become $q_{\min} = 0 \times 30 = 0, q_{\max} = 30 \times 6 = 180$. Then we perform the training phase, as explained in Section \ref{sec2}, with a few number of iterations. The resulting parameters are used as an initial guess for the actual pricing problem (with several exercice dates). Note that in the \textit{PV strat} the number of parameters depends on the number of exercise dates. Thus, to initialize the parameters in the problem with 365 exercise dates with the parameters obtained in the aggregated problem (with 12 exercise dates), we assume that all days within the same month behave the same. Therefore the initial guess parameters for all days within the same month are those obtained in the aggregated problem for this month. It turns out that this allows to reach more quickly a descent optimum.

\begin{table}[ht]
    \centering
\begin{tblr}{colspec={c},hlines}
\hline
      & {$\widehat{P_0}$\\$\widehat{P_0}^{BG}$ } & $T_{agg}$ & $(T_{train}, T_{e})$ \\
      \hline
     \textbf{\textit{PV strat}}& {2688.73 ([2687.82, 2689.63]) \\2692.81 ([2691.90, 2693.71]) }&  4.8 & (83.2, 76.3)\\
     \textbf{\textit{NN strat}}& {2693.22 ([2692.31, 2694.13]) \\2694.12 ([2693.21, 2695.02]) }& 6.9 & (206.5, 398.1)\\

\end{tblr}
\caption{Results for a one-year swing contract using transfer learning. We used 500 iterations for the aggregated contract and 300 iterations for the actual one-year contract. $T_{agg}$ denotes the training time for the aggregated contract, $T_{train}$ and $T_{e}$ denote respectively the training time and the valuation time for the actual contract.}
\label{transfer_learning_one_year_swing}
\end{table}

\noindent
We can notice from Table \ref{transfer_learning_one_year_swing} that by means of transfer learning, the training time is reduced by a factor of 3 without degrading the accuracy. Besides, always with in mind the aim of reducing the computation time, transfer learning can be used differently. In a few words, by means of transfer learning, when the market data change or when the contracts settings change, we do not need to start all over again. To illustrate this point, let us consider below three cases: \textit{\textbf{M1}}, \textit{\textbf{M2}}, \textit{\textbf{M3}}, representing some shifts of market/contract settings. Recall that the baseline settings are those considered above (\textit{case 2} setting).

\begin{table}[ht]
    \centering
\begin{tblr}{colspec={c},hlines}
\hline
     \textit{\textbf{Case M1}} & \textit{\textbf{Case M2}} & \textit{\textbf{Case M3}}\\
     \hline
     $F_{0, t_k} = 22$ & $\left(Q_{\min}, Q_{\max} \right) = (1400, 2000)$ & $K = 18$\\
\end{tblr}
\caption{Market data move scenarios}
\label{scenario_mkt_move}
\end{table}

\noindent
In Table \ref{scenario_mkt_move}, we consider three different market move scenarios described as follows. In the first case, we bump the initial forward curve from 20 to 22. In the second case, we change the global constraints from $Q_{\min} = 1300$ and $Q_{\max} = 1900$ to $Q_{\min} = 1400$ and $Q_{\max} = 2000$. In the final case, we reduce the strike price from 20 to 18. We could have changed the local constraints but as already pointed out, swing pricing can always reduce to the case where $q_{\min} = 0, q_{\min} = 1$ (see Appendix \ref{swing decompo}). Now the question is the following: Could our baseline model give accurate prices without training again ? Or could it help us to get accurate prices quickly ? The answers to these questions are recorded in the following tables.

\begin{table}[ht]
    \centering
\begin{tblr}{colspec={c},hlines}
\hline
     Cases & Reuse & Retrain & $(T_{train}, T_{e})$ \\
     \hline
     \textbf{Case \textit{\textbf{M1}}}& {5942.65 ([5941.64, 5943.65]) \\5944.27 ([5943.26, 5945.27]) }& {6005.75 ([6004.70, 6006.79]) \\6008.54 ([6007.50, 6009.58]) } & (85.4, 77.2)\\
     \textbf{Case \textit{\textbf{M2}}}& {2514.86 ([2513.90, 2515.81]) \\2517.26 ([2516.30, 2518.21]) }& {2516.99 ([2516.04, 2517.93]) \\2518.73 ([2517.78, 2519.67]) } & (85.0, 76.9)\\
     \textbf{Case \textit{\textbf{M3}}}& {5655.98 ([5655.06, 5656.89]) \\5657.67 ([5656.75, 5658.58]) }& {5744.91 ([5743.95, 5745.86]) \\5747.39 ([5746.43, 5748.34]) } & (85.9, 77.05)\\
\end{tblr}
\caption{Results for \textit{PV strat}. Column \q{Reuse} provides results when the baseline model parameters are reused as is. Column \q{Retrain} gives results when we retrain with only 300 iterations with the baseline model parameters used as starting values. $T_{train}$ denotes the training time and $T_{e}$ the valuation time. The testing computation time is the same when we reuse the model as when we retrain.}
\label{transfer_learning_mkt_move_strat2}
\end{table}

\begin{table}[ht!]
    \centering
\begin{tblr}{colspec={c},hlines}
\hline
     Cases & Reuse & Retrain & $(T_{train}, T_{e})$ \\
     \hline
     \textbf{Case \textit{\textbf{M1}}}& {5981.17 ([5980.14, 5982.19]) \\5982.57 ([5981.54, 5983.59]) }& {6011.08 ([6010.03, 6012.12]) \\6011.62 ([6010.57, 6012.66]) } & (203.1, 397.8)\\
     \textbf{Case \textit{\textbf{M2}}}& {2509.97 ([2509.01, 2510.90]) \\2511.08 ([2510.12, 2512.03]) }& {2516.92 ([2515.96, 2517.87]) \\2517.58 ([2516.62, 2518.53]) } & (203.7, 397.5)\\
     \textbf{Case \textit{\textbf{M3}}}& {5713.80 ([5712.85, 5714.72]) \\5715.24 ([5714.34, 5716.17]) }& {5749.97 ([5749.01, 5750.92]) \\5750.51 ([5749.55, 5751.46]) } & (203.4, 397.6)\\
\end{tblr}
\caption{Results for \textit{NN strat}. Column \q{Reuse} provides results when the baseline model parameters are reused as is. Column \q{Retrain} gives results when we retrain with only 300 iterations with the baseline model parameters used as starting values. $T_{train}$ denotes the training time and $T_{e}$ the valuation time. The testing computation time is the same whether we reuse the model or we retrain it.}
\label{transfer_learning_mkt_move_strat3}
\end{table}

Tables \ref{transfer_learning_mkt_move_strat2} and \ref{transfer_learning_mkt_move_strat3} demonstrate that in the case where global constraints change (\textit{M2}), using the baseline parameters without additional training gives prices that are comparable to those obtained by training with a few iterations starting from the baseline parameters. Furthermore, it is important to note that increasing the number of iterations beyond 300 (we tested up to 1000 iterations in our experiments) does not lead to better prices than those obtained with 300 iterations (see columns labelled \q{Retrain}). This suggests that reusing the baseline parameters can speed up convergence. This observation remains true regardless of the case (\textit{M1}, \textit{M2} or \textit{M3}).

\section{Numerical experiments}
\label{sec5}

\indent
We now perform additional simulations to demonstrate the effectiveness of the two parameterizations proposed in this paper.

\subsection{Three factor model}

\indent
We first consider a three factor model whose dynamics is given by:
\begin{equation}
	\label{3fac_diff}
	\frac{dF_{t, T}}{F_{t_, T}} = \sigma_{F,1} \cdot e^{-\lambda_1 (T-t)}dW_t^1 + \sigma_{F,2} \cdot e^{-\lambda_2 (T-t)}dW_t^2 + \sigma_{F,3} \cdot e^{-\lambda_3 (T-t)}dW_t^3,
\end{equation}

\noindent
where for all $1 \le i, j \le 3$, the instantaneous correlation is given by:
\begin{equation*}
\langle dW_{\cdot}^i,dW_{\cdot}^j \rangle_t =  \left\{
    \begin{array}{ll}
        dt \hspace{1.4cm} \text{if} \hspace{0.2cm} i = j\\
        \rho_{i, j} \cdot dt \hspace{0.6cm} \text{if} \hspace{0.2cm} i \neq j
    \end{array}
\right.
\end{equation*}

\noindent
In this model, the spot price is given by
\begin{equation*}
S_t = F_{0,t} \cdot \exp\Big(\langle \sigma_F, X_t \rangle - \frac{1}{2}\Sigma_t^2\Big),
\end{equation*}

\noindent
where $\sigma_F = \big(\sigma_{F,1} , \sigma_{F,2} , \sigma_{F,3} )^\top$, $X_t = \big( X_t^1, X_t^2, X_t^3 \big)^\top $ and for all $1 \le i \le 3$,
\begin{equation*}
X_t^i = \int_{0}^{t} e^{-\lambda_i (t-s)}\, \mathrm{d}W_s^i \hspace{0.3cm} \text{and} \hspace{0.2cm} \Sigma_t^2 =  \sum_{1 \le i,j \le 3} \rho_{i, j} \frac{\sigma_{F,i} \sigma_{F,j} }{\lambda_i + \lambda_j} \Big(1 - e^{-(\lambda_i + \lambda_j)t}  \Big).
\end{equation*}

\noindent
We set the following configuration: $\sigma_{F,i}  = 0.7, \lambda_i = 1.5, \rho_{i, j} = \rho \cdot \mathbf{1}_{i \neq j} + \mathbf{1}_{i = j} \in [-1, 1]$ along with \textit{case 1} settings. We use $N = 1000$ iterations, a learning rate $\gamma = 0.1$ and $M_{e} = 1 \cdot e^8$ simulations for the valuation. Results are recorded in Tables \ref{three_factor_model_results_strat1} and \ref{three_factor_model_results_strat2}.

\begin{table}[ht]
    \centering
\begin{tblr}{colspec={c},hlines}
\hline
    $\rho$ & $\widehat{P_0}$ & $\widehat{P_0}^{BG}$ \\
    \hline
    0.6 & 172.71 ([172.51, 172.90])& 172.72 ([172.52, 172.92])\\
	0.3 & 147.79 ([147.63, 147.95])& 147.78 ([147.62, 147.95])\\
	-0.2 & 91.02 ([90.92, 91.12])& 91.02 ([90.92, 91.12])\\
\end{tblr}
\caption{Results using \textit{PV strat}. A confidence interval (95\%) is in brackets. For each result, the execution time (training plus valuation) is roughly equal to 22s.}
\label{three_factor_model_results_strat1}
\end{table}

\begin{table}[ht]
    \centering
\begin{tblr}{colspec={c},hlines}
\hline
    $\rho$ & $\widehat{P_0}$ & $\widehat{P_0}^{BG}$ \\
    \hline
    0.6 & 173.05 ([172.85, 173.24])& 172.98 ([178.78, 173.17])\\
	0.3 & 147.97 ([147.81, 148.14])& 147.92 ([147.75, 148.08])\\
	-0.2 & 91.18 ([91.08, 91.28])& 91.13 ([91.03, 91.23])\\
\end{tblr}
\caption{Results using \textit{NN strat}. A confidence interval (95\%) is in brackets. For the neural network architecture, we used $I = 2$ layers with $q_1 = q_2 = 10$ units. For each result, the execution time (training plus valuation) is roughly equal to 45s.}
\label{three_factor_model_results_strat2}
\end{table}

\newpage

It should be noted that there exists another way to implement \textit{NN strat} which takes into account the structure of forward prices. Specifically, in the three-factor framework \eqref{3fac_diff}, the forward price depends on state variables that are components of the $\mathbb{R}^3$-valued random vector $X_{t_\ell}$. In this context, one can include $X_{t_\ell}$ as an additional input to the neural network described in \eqref{def_chi_k_NN}, in addition to time $t_\ell$, payoff $S_{t_\ell} - K$, and volume normalization $M(Q_\ell)$. That is, $I_\ell = \big(t_\ell, S_{t_\ell} - K, X_{t_\ell}, M(Q_\ell) \big)^\top \in \mathbb{R}^{d+3}$, where $d$ is the dimension of $X_{t_\ell}$ (in the three-factor model, $d = 3$). This approach can help to capture the correlation structure in a multi-factor framework. The prices resulting from this alternative approach are presented in Table \ref{three_factor_model_results_strat2_state_var}.

\begin{table}[ht]
    \centering
\begin{tblr}{colspec={c},hlines}
\hline
    $\rho$ & $\widehat{P_0}$ & $\widehat{P_0}^{BG}$ \\
    \hline
    0.6 & 172.76 ([172.56, 172.95])& 172.70 ([172.50, 172.89])\\
	0.3 & 147.95 ([147.78, 148.11])& 147.91 ([147.75, 148.07])\\
	-0.2 & 91.11 ([91.01, 91.20])& 91.07 ([90.97, 91.17])\\
\end{tblr}
\caption{Results using \textit{NN strat} including state variables. A confidence interval (95\%) is in brackets. For the neural network architecture we used $I = 2$ layers with $q_1 = q_2 = 10$ units. For each result, the execution time (training plus valuation) is roughly equal to 50s.}
\label{three_factor_model_results_strat2_state_var}
\end{table}

It appears that, regardless of whether the model is multi-factor or not, both strategies (\textit{PV strat} and \textit{NN strat}) are able to determine the optimal volume to purchase at each exercise date, based only on the payoff and the cumulative volume. In the following section, we will evaluate the performance of our strategies on a more complex diffusion model.

\subsection{Multi-curve forward diffusion}

\indent
We finally consider a multi-curve model. At a certain valuation date $t$, we consider $p$ risk factors. Each risk factor $i \in \{1,\ldots,p\}$, denoted by $F_{t, T_i}$, is a forward contract observed at date $t$ and expiring at date $T_i$. It is modelled through an instantaneous volatility function $\sigma_t(T_i)$ with a dynamics given by,
\begin{equation}
\label{BGM}
\frac{dF_{t, T_i}}{F_{t, T_i}} = \sigma_t(T_i) dW_t(T_i),
\end{equation}

\noindent
where $(W_t(T_i))_{t \ge 0}$ are standard Brownian motions with instantaneous correlation given by:
$$\langle dW_{\cdot}(T_i), dW_{\cdot}(T_j) \rangle_t = \rho_{i, j} dt.$$

\noindent
The model \eqref{BGM} implies to diffuse one curve per risk factor $F_{t, T_i}$ defined by its maturity date $T_i$. Likewise, the spot at a given date generates its own risk factor. But due to high correlation (at any date, two risk factors are very correlated when maturities are very close), one can see that only a few risk factors really impact the price. In practice, forward contracts delivering in the same month are highly correlated. Thus we will consider that there are as many factors as months. For example, for a contract on a whole year $p = 12$; therefore 12 (correlated) Brownian motions overall. In this model, we consider a valuation date on March $17^{th}$, 2021 and a swing contract delivering on the period January 2022-December 2022. The swing volume constraints are: $q_{\min} = 0, q_{\max} = 1, Q_{\min} = 180, Q_{\max} = 270$. For the diffusion model, the correlation matrix is (\textit{in percentage}):

\setcounter{MaxMatrixCols}{20}

$$\begin{pmatrix}
100 & 99.62 & 98.61 & 91.12 & 91.12 & 91.12 & 90.24 & 90.24 & 90.24 & 87.2 & 87.2 & 87.2\\
99.62 & 100 & 98.91 & 91.65 & 91.65 & 91.65 & 90.68 & 90.68 & 90.68 & 87.8 & 87.8 & 87.8\\
98.61 & 98.61 & 100 & 93.89 & 93.89 & 93.89 & 92.79 & 92.79 & 92.79 & 89.94 & 89.94 & 89.94\\
91.12 & 91.65 & 93.89 & 100 & 100 & 100 & 99.47 & 99.47 & 99.47 & 97.06 & 97.06 & 97.06\\
91.12 & 91.65 & 93.89 & 100 & 100 & 100 & 99.47 & 99.47 & 99.47 & 97.06 & 97.06 & 97.06\\
91.12 & 91.65 & 93.89 & 100 & 100 & 100 & 99.47 & 99.47 & 99.47 & 97.06 & 97.06 & 97.06\\
90.24 & 90.68 & 92.79 & 99.4 & 99.47 & 99.4 & 100 & 100 & 100 & 97.32 & 97.32 & 97.32\\
90.24 & 90.68 & 92.79 & 99.47 & 99.47 & 99.47 & 100 & 100 & 100 & 97.32 & 97.32 & 97.32\\
90.24 & 90.68 & 92.79 & 99.47 & 99.47 & 99.47 & 100  & 100 & 100 & 97.32 & 97.32 & 97.32\\
87.2 & 87.8 & 89.94 & 97.06 & 97.06 & 97.06 & 97.32 & 97.32 & 97.32 &100 & 100& 100\\
87.2 & 87.8 & 89.94 & 97.06 & 97.06 & 97.06 & 97.32 & 97.32 & 97.32 &100& 100 & 100\\
87.2 & 87.8 & 89.94 & 97.06 & 97.06 & 97.06 & 97.32 & 97.32 & 97.32 & 100 & 100 & 100
\end{pmatrix}$$

\vspace{0.3cm}

\noindent
Volatilities and initial forward prices are assumed to be constant per month. For months in 2022, the square of the instantaneous volatility $\sigma^2_t(T_i)$ is given by Figure \ref{vol_curve_bgm} and the initial forward curve is given by Figure \ref{f0_curve_bgm}.

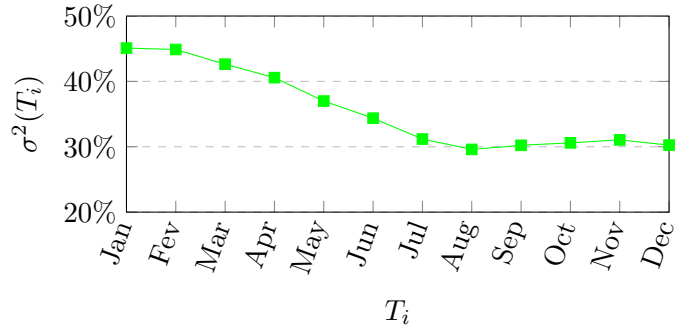
\begin{figure}[ht!]
\centering
\begin{tikzpicture}
\begin{axis}[
    xlabel={$T_i$},
    ylabel={$\sigma^2(T_i)$},
    xmin=0, xmax=11,
    ymin=20, ymax=50,
    ymajorgrids=true,
    grid style=dashed,
   	width=0.5\linewidth,
	height=0.15\paperheight,
	xticklabels={Jan,Fev,Mar, Apr, May, Jun, Jul, Aug, Sep, Oct, Nov,Dec},xtick={0,...,11},
  x tick label style={rotate=70,anchor=east},
  yticklabel = {\pgfmathprintnumber\tick\%}
]

\addplot[
    color=green,
    mark=square*,
    ]
    coordinates {
    (0,45.09)(1,44.88)(2,42.63)(3,40.57)(4,36.99)(5,34.37)(6,31.18)(7,29.59)(8,30.22)(9,30.60)(10,31.05)(11,30.25)
    };
    
\end{axis}

\end{tikzpicture}
\caption{Square of instantaneous volatility per risk factor observed on 17-March-2021. Values are: $\big(45.09\%,44.88\%,42.63\%,40.57\%,36.99\%,34.37\%,31.18\%,29.59\%,30.22\%,30.6\%,31.05\%,30.25\%\big)$}
\label{vol_curve_bgm}
\end{figure}

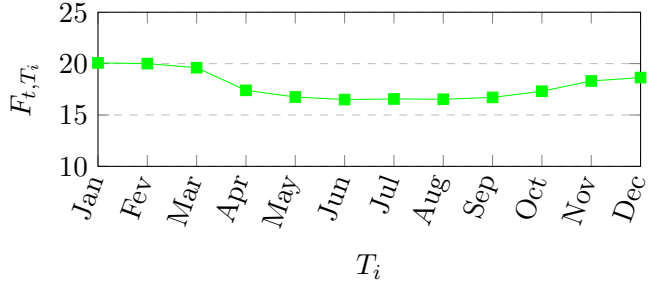
\begin{figure}[ht!]
\centering
\begin{tikzpicture}
\begin{axis}[
    xlabel={$T_i$},
    ylabel={$F_{t, T_i}$},
    xmin=0, xmax=11,
    ymin=10, ymax=25,
    ymajorgrids=true,
    grid style=dashed,
   	width=0.5\linewidth,
	height=0.13\paperheight,
	xticklabels={Jan,Fev,Mar, Apr, May, Jun, Jul, Aug, Sep, Oct, Nov,Dec},xtick={0,...,11},
  x tick label style={rotate=70,anchor=east}
]

\addplot[
    color=green,
    mark=square*,
    ]
    coordinates {
    (0,20.07)(1,20.0)(2,19.6)(3,17.4)(4,16.75)(5,16.5)(6,16.56)(7,16.53)(8,16.71)(9,17.31)(10,18.31)(11,18.64)
    };
    
\end{axis}

\end{tikzpicture}
\caption{Forward prices per risk factor observed on 17-March-2021. Prices per risk factor are: $\big(20.07,20,19.6,17.4,16.75,16.5,16.56,16.53,16.71,17.31,18.31,18.64\big)$}
\label{f0_curve_bgm}
\end{figure}

\noindent
Finally, the strike price corresponds to the average of all initial forward prices; which gives 17.865. For this diffusion model, \textit{PV strat} does not perform well. This is probably due to the multi-curve framework which is more complex than a multifactor model. However, by modifying \textit{NN strat} as in the preceding section by adding, in neural network inputs, Brownian motions $\big(W_{t_k}(T_i)\big)_{1 \le i \le p}$ at each exercise date $t_k$, allows to achieve very good prices. We compare the latter implementation of \textit{NN strat} with a method based on a static replication of swing contracts by means of spread options. The latter method aims at approximating swing options prices by selecting a combination of spread options (\textit{SO}) where the underlying is driven by the same dynamics \eqref{BGM}. The selection is done through a linear programming method. Results are recorded in Table \ref{bgm_strat2_state_var}.

\begin{table}[ht]
    \centering
\begin{tblr}{colspec={c},hlines}
\hline
    $\widehat{P_0}$ & $\widehat{P_0}^{BG}$  & $SO$ \\
    \hline
    708.46 ([705.76, 711.16])& 711.32 ([708.61, 714.03]) & 645.3\\
\end{tblr}
\caption{Results using \textit{NN strat}. A confidence interval (95\%) is in brackets. For the neural network architecture we used $I = 2$ layers with $q_1 = q_2 = 50$ units.}
\label{bgm_strat2_state_var}
\end{table}

\section*{Conclusion}
\indent
We introduced two parametric methods for the pricing of swing contracts based on a global optimization approach. The first method involves an explicit parameterization of the optimal control based on some heuristics, while the second method improves upon the first by using neural networks. We conducted numerical experiments to compare two optimization algorithms (Adam and PSGLD). Our results demonstrate that using Langevin based optimization algorithms, namely PSGLD, allows us to achieve better prices with short computation time. Additionally, we tested our neural network parameterization in various complex diffusion models and demonstrated its robustness and accuracy compared to state-of-the-art methods. We can hence conclude that our neural network based method is well-suited to the pricing swing options.

\section*{Acknowledgements}
The third author is grateful for the financial support provided by Engie Global Markets via CIFRE agreement and would like to thank Asma Meziou and Frederic Muller for throughout insights on swing contracts.

\footnotesize
\appendix

\section{Swing contract decomposition}
\label{swing decompo}

\indent
We aim to prove that any swing contract can be reduced to a normalized contract, namely a swing contract with local constraints $q_{\min} = 0, q_{\max} = 1$. The swing contract price is given by
\begin{equation*}
    P_0 = \esssup_{(q_{\ell})_{0 \le \ell \le n-1} \in \mathcal{A}_{0, 0}^{Q_{\min}, Q_{\max}}} \hspace{0.1cm} \mathbb{E}\left(\sum_{\ell=0}^{n-1} q_{\ell} \times (S_{t_\ell} - K) \right),
\end{equation*}

\noindent
where 
$$\mathcal{A}_{0, 0}^{Q_{\min}, Q_{\max}} = \left\{(q_{\ell})_{0 \le \ell \le n-1}, \hspace{0.1cm} q_{\ell} : (\Omega, \mathcal{F}_{t_\ell}^S, \mathbb{P}) \mapsto [q_{\min}, q_{\max}], \hspace{0.1cm} \sum_{\ell = 0}^{n-1} q_{\ell} \in [Q_{\min}, Q_{\max}] \right\}.$$

\noindent
It follows from the linearity of the expectation that,
\begin{align*}
    P_0 &= q_{\min} \times \mathbb{E}\left(\sum_{\ell=0}^{n-1} (S_{t_\ell} - K) \right) + (q_{\max} - q_{\min}) \times \esssup_{(q_{\ell})_{0 \le \ell \le n-1} \in \mathcal{A}_{0, 0}^{Q_{\min}, Q_{\max}}} \hspace{0.1cm} \mathbb{E}\left(\sum_{\ell=0}^{n-1} \frac{q_{\ell} - q_{\min}}{q_{\max} - q_{\min}} \times (S_{t_\ell} - K) \right).
\end{align*}

\noindent
The first term in the last equality is given by
$$\mathbb{E}\left(\sum_{\ell=0}^{n-1} (S_{t_\ell} - K) \right) = \sum_{\ell=0}^{n-1}\mathbb{E}\left(S_{t_\ell} \right) - n \cdot K,$$

\noindent
and can be easily computed using either closed formula (depending on the underlying diffusion model) or Monte-Carlo method. Let us consider the second term. Let $(q_{\ell})_{0 \le \ell \le n-1} \in \mathcal{A}_{0, 0}^{Q_{\min}, Q_{\max}}$ and define $\Tilde{q}_{\ell} = \frac{q_{\ell} - q_{\min}}{q_{\max} - q_{\min}}$. Note that $(\Tilde{q}_{\ell})_{0 \le \ell \le n-1} \in \mathcal{A}_{0, 0}^{\Tilde{Q}_{\min}, \Tilde{Q}_{\max}}$ where
$$\mathcal{A}_{0, 0}^{\Tilde{Q}_{\min}, \Tilde{Q}_{\max}} = \left\{(q_{\ell})_{0 \le \ell \le n-1}, \hspace{0.1cm} q_{\ell} : (\Omega, \mathcal{F}_{t_\ell}, \mathbb{P}) \mapsto [0, 1], \hspace{0.1cm} \sum_{\ell = 0}^{n-1} q_{\ell} \in [\Tilde{Q}_{\min}, \Tilde{Q}_{\max}] \right\},$$

\noindent
and
$$\Tilde{Q}_{\min} = \frac{(Q_{\min} - n \cdot q_{\min})_{+}}{q_{\max} - q_{\min}} \hspace{0.7cm} \Tilde{Q}_{\max} = \frac{(Q_{\max} - n \cdot q_{\min})_{+}}{q_{\max} - q_{\min}}.$$

\noindent
Thus,
\begin{align*}
    \mathbb{E}\left(\sum_{\ell=0}^{n-1} \frac{q_{\ell} - q_{\min}}{q_{\max} - q_{\min}} \times (S_{t_\ell} - K) \right) &= \mathbb{E}\left(\sum_{\ell=0}^{n-1} \Tilde{q}_{\ell} \times (S_{t_\ell} - K) \right)\\
    &\le \esssup_{(q_{\ell})_{0 \le \ell \le n-1} \in \mathcal{A}_{0, 0}^{\Tilde{Q}_{\min}, \Tilde{Q}_{\max}}}\hspace{0.1cm} \mathbb{E}\left(\sum_{\ell=0}^{n-1} q_{\ell} \times (S_{t_\ell} - K) \right).
\end{align*}

\noindent
Therefore taking the supremum yields,
$$\esssup_{(q_{\ell})_{0 \le \ell \le n-1} \in \mathcal{A}_{0, 0}^{Q_{\min}, Q_{\max}}} \hspace{0.1cm} \mathbb{E}\left(\sum_{\ell=0}^{n-1} \frac{q_{\ell} - q_{\min}}{q_{\max} - q_{\min}} \times (S_{t_\ell} - K) \right) \le \esssup_{(q_{\ell})_{0 \le \ell \le n-1} \in \mathcal{A}_{0, 0}^{\Tilde{Q}_{\min}, \Tilde{Q}_{\max}}} \hspace{0.1cm} \mathbb{E}\left(\sum_{\ell=0}^{n-1} q_{\ell} \times (S_{t_\ell} - K) \right).$$

\noindent
Conversely let $(q_{\ell})_{0 \le \ell \le n-1} \in \mathcal{A}_{0, 0}^{\Tilde{Q}_{\min}, \Tilde{Q}_{\max}}$ and define $\Tilde{q}_{\ell} = q_{\min} + (q_{\max} - q_{\min}) \cdot q_{\ell} \in [q_{\min}, q_{\max}]$. It follows $\displaystyle \sum_{\ell = 0}^{n-1} q_{\ell} \in [Q_{\min}, Q_{\max}]$ so that $\left(\Tilde{q}_{\ell} \right)_{0 \le \ell \le n-1} \in \mathcal{A}_{0, 0}^{Q_{\min}, Q_{\max}}$. Thus,
\begin{align*}
    \mathbb{E}\left(\sum_{\ell=0}^{n-1} q_{\ell} \times (S_{t_\ell} - K) \right) &= \mathbb{E}\left(\sum_{\ell=0}^{n-1} \frac{\Tilde{q}_{\ell} - q_{\min}}{q_{\max} - q_{\min}} \times (S_{t_\ell} - K) \right)\\
    &\le \esssup_{(q_{\ell})_{0 \le \ell \le n-1} \in \mathcal{A}_{0, 0}^{Q_{\min}, Q_{\max}}} \hspace{0.1cm} \mathbb{E}\left(\sum_{\ell=0}^{n-1} \frac{q_{\ell} - q_{\min}}{q_{\max} - q_{\min}} \times (S_{t_\ell} - K) \right).
\end{align*}

\noindent
Taking the supremum, we get,
$$\esssup_{(q_{t_\ell})_{0 \le \ell \le n-1} \in \mathcal{A}_{0, 0}^{\Tilde{Q}_{\min}, \Tilde{Q}_{\max}}} \hspace{0.1cm} \mathbb{E}\left(\sum_{\ell=0}^{n-1} q_{\ell} \times (S_{t_\ell} - K) \right) \le \esssup_{(q_{\ell})_{0 \le \ell \le n-1} \in \mathcal{A}_{0, 0}^{Q_{\min}, Q_{\max}}} \hspace{0.1cm} \mathbb{E}\left(\sum_{\ell=0}^{n-1} \frac{q_{\ell} - q_{\min}}{q_{\max} - q_{\min}} \times (S_{t_\ell} - K) \right).$$

\noindent
Therefore,
\begin{align*}
    P_0 &= q_{\min} \times \mathbb{E}\left(\sum_{\ell=0}^{n-1} (S_{t_\ell} - K) \right) + (q_{\max} - q_{\min}) \times \esssup_{(q_{\ell})_{0 \le \ell \le n-1} \in \mathcal{A}_{0, 0}^{\Tilde{Q}_{\min}, \Tilde{Q}_{\max}}}\hspace{0.1cm} \mathbb{E}\left(\sum_{\ell=0}^{n-1} q_{\ell} \times (S_{t_\ell} - K) \right).
\end{align*}

\section{On the rate of convergence}
\label{sec4}

\indent
This section aims to estimate, at least numerically, the rate of convergence.  We denote by $U_n$ the price obtained after the $n^{th}$ iteration in the training step. We make the assumption that for some constant $C > 0$
$$U_n = U_{\infty} + \frac{C}{n^{\alpha}},$$

\noindent
where $U_{\infty}$ is the limit of the stochastic procedure that we assume to exist. Thus,
$$\log(|U_{2n} - U_{n}|) = K_{\alpha} + \alpha \hspace{0.1cm} \log(\frac{1}{n}), \hspace{0.5cm} \text{where} \hspace{0.1cm} K_{\alpha} = \log(|C(2^{-\alpha} - 1)|).$$

\noindent
Therefore the coefficient $\alpha$ (representing the rate of convergence following the assumption) appears to be the slope in the $\log - \log$ regression of $|U_{2n} - U_{n}|$ against $\frac{1}{n}$.

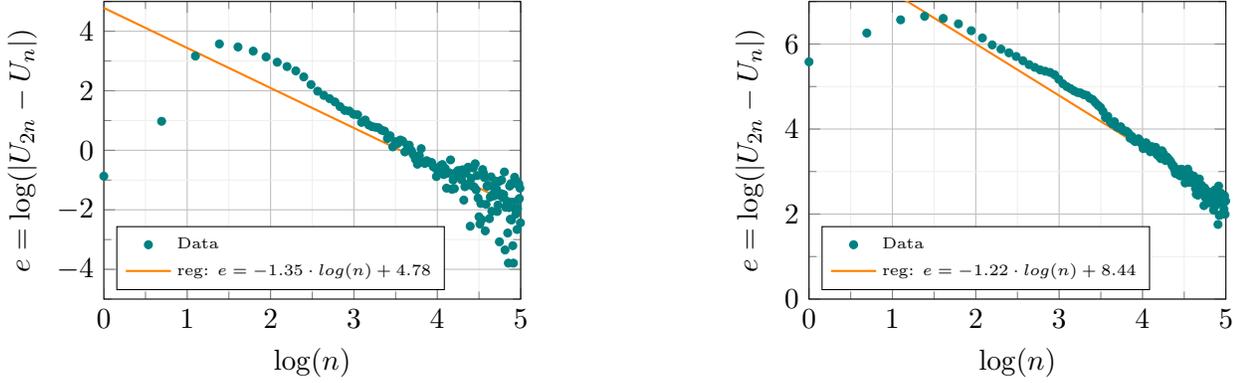
\begin{figure}[ht!]
  \centering
  \begin{subfigure}[b]{0.45\textwidth}
%%%    \begin{adjustbox}{width=\linewidth} % rescale box
\begin{tikzpicture}{scale = 0.4}
\begin{axis}[
    xmin = 0, xmax = 5,
    ymin = -5, ymax = 5,
    width = 0.9\textwidth,
    height = 0.7\textwidth,
    xtick distance = 1,
    ytick distance = 2,
    grid = both,
    minor tick num = 1,
    major grid style = {lightgray},
    minor grid style = {lightgray!25},
    xlabel = {$\log(n)$},
    ylabel = {$e=\log(|U_{2n} - U_n|)$},
    legend cell align = {left},
    legend pos = south west,
    legend style = {font=\tiny},
    mark size=1.5pt,
]
 
% Plot data
\addplot[
    teal, 
    only marks
] table[x = n, y = e, col sep = comma] {datas/rate_cvg/reg_explicit_30_days.csv};
 
% Linear regression
\addplot[
    thick,
    orange
] table[
    x = n,
    y = {create col/linear regression={y=e}},
    col sep = comma
] {datas/rate_cvg/reg_explicit_30_days.csv};
 
% Add legend
\addlegendentry{Data}
\addlegendentry{
    reg: $ e =
    \pgfmathprintnumber{\pgfplotstableregressiona}
    \cdot log(n)
    \pgfmathprintnumber[print sign]{\pgfplotstableregressionb}$
};
 
\end{axis}
\end{tikzpicture}
\end{subfigure}%
\hfill
\begin{subfigure}[b]{0.45\textwidth}
\begin{tikzpicture}{scale = 0.4}
\begin{axis}[
    xmin = 0, xmax = 5,
    ymin = 0, ymax = 7,
    width = 0.9\textwidth,
    height = 0.7\textwidth,
    xtick distance = 1,
    ytick distance = 2,
    grid = both,
    minor tick num = 1,
    major grid style = {lightgray},
    minor grid style = {lightgray!25},
    xlabel = {$\log(n)$},
    ylabel = {$e=\log(|U_{2n} - U_n|)$},
    legend cell align = {left},
    legend pos = south west,
    legend style = {font=\tiny},
    mark size=1.5pt,
]
 
% Plot data
\addplot[
    teal, 
    only marks
] table[x = n, y = e, col sep = comma] {datas/rate_cvg/reg_explicit_365_days.csv};
 
% Linear regression
\addplot[
    thick,
    orange
] table[
    x = n,
    y = {create col/linear regression={y=e}},
    col sep = comma
] {datas/rate_cvg/reg_explicit_365_days.csv};
 
% Add legend
\addlegendentry{Data}
\addlegendentry{
    reg: $ e =
    \pgfmathprintnumber{\pgfplotstableregressiona}
    \cdot log(n)
    \pgfmathprintnumber[print sign]{\pgfplotstableregressionb}$
};
 
\end{axis}
\end{tikzpicture}
\end{subfigure}
\caption{Numerical Convergence (logarithmic scale) for \textit{PV strat}. Case 1 (left), case 2 (right)}
\label{rate_cvg1}
\end{figure}

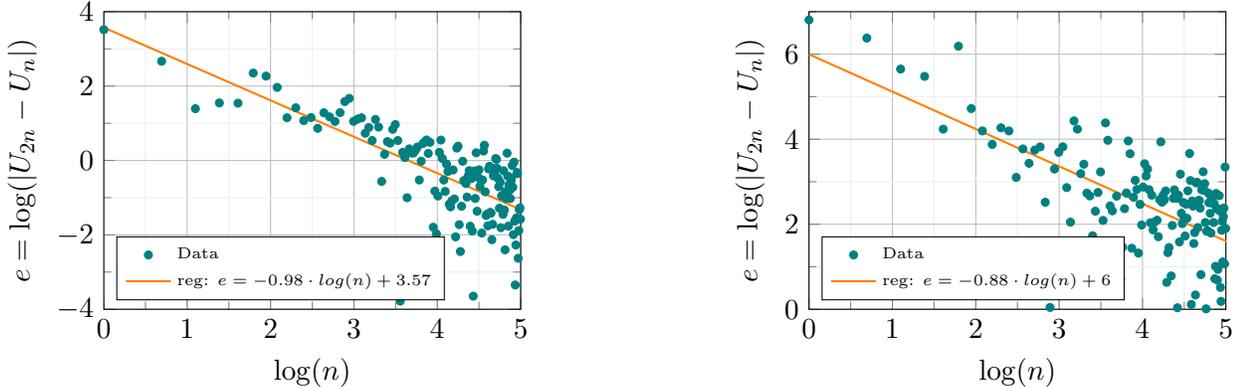
\begin{figure}[ht!]
  \centering
  \begin{subfigure}[b]{0.45\textwidth}
%%%    \begin{adjustbox}{width=\linewidth} % rescale box
\begin{tikzpicture}{scale = 0.4}
\begin{axis}[
    xmin = 0, xmax = 5,
    ymin = -4, ymax = 4,
    width = 0.9\textwidth,
    height = 0.7\textwidth,
    xtick distance = 1,
    ytick distance = 2,
    grid = both,
    minor tick num = 1,
    major grid style = {lightgray},
    minor grid style = {lightgray!25},
    xlabel = {$\log(n)$},
    ylabel = {$e=\log(|U_{2n} - U_n|)$},
    legend cell align = {left},
    legend pos = south west,
    legend style = {font=\tiny},
    mark size=1.5pt,
]
 
% Plot data
\addplot[
    teal, 
    only marks
] table[x = n, y = e, col sep = comma] {datas/rate_cvg/reg_nn_30_days.csv};
 
% Linear regression
\addplot[
    thick,
    orange
] table[
    x = n,
    y = {create col/linear regression={y=e}},
    col sep = comma
] {datas/rate_cvg/reg_nn_30_days.csv};
 
% Add legend
\addlegendentry{Data}
\addlegendentry{
    reg: $ e =
    \pgfmathprintnumber{\pgfplotstableregressiona}
    \cdot log(n)
    \pgfmathprintnumber[print sign]{\pgfplotstableregressionb}$
};
 
\end{axis}
\end{tikzpicture}
\end{subfigure}%
\hfill
\begin{subfigure}[b]{0.45\textwidth}
\begin{tikzpicture}{scale = 0.4}
\begin{axis}[
    xmin = 0, xmax = 5,
    ymin = 0, ymax = 7,
    width = 0.9\textwidth,
    height = 0.7\textwidth,
    xtick distance = 1,
    ytick distance = 2,
    grid = both,
    minor tick num = 1,
    major grid style = {lightgray},
    minor grid style = {lightgray!25},
    xlabel = {$\log(n)$},
    ylabel = {$e=\log(|U_{2n} - U_n|)$},
    legend cell align = {left},
    legend pos = south west,
    legend style = {font=\tiny},
    mark size=1.5pt,
]
 
% Plot data
\addplot[
    teal, 
    only marks
] table[x = n, y = e, col sep = comma] {datas/rate_cvg/reg_nn_365_days.csv};
 
% Linear regression
\addplot[
    thick,
    orange
] table[
    x = n,
    y = {create col/linear regression={y=e}},
    col sep = comma
] {datas/rate_cvg/reg_nn_365_days.csv};
 
% Add legend
\addlegendentry{Data}
\addlegendentry{
    reg: $ e =
    \pgfmathprintnumber{\pgfplotstableregressiona}
    \cdot log(n)
    \pgfmathprintnumber[print sign]{\pgfplotstableregressionb}$
};
 
\end{axis}
\end{tikzpicture}
\end{subfigure}
\caption{Numerical Convergence (logarithmic scale) \textit{NN strat}. Case 1 (left), case 2 (right)}
\label{rate_cvg2}
\end{figure}

\noindent
Figures \ref{rate_cvg1} and \ref{rate_cvg2} suggest that both parameterizations give a rate of convergence of order $\mathcal{O}(\frac{1}{N})$ (where $N$ is the number of iterations). This convergence rate is much faster than Monte Carlo, and suggests that our parameterizations are efficient alternatives to Longstaff-Schwartz method for this problem.

\section{Estimator variance and computation time}
\label{add_res}

\indent
In this section we illustrate the high variance phenomenon which may appear when pricing swing options. To this end, we compute 100 realizations of swing contract price within \textit{case 1} setting for the three methods: \textit{PV strat}, \textit{NN strat} and Longstaff-Schwartz. The distributions of prices are represented in Figures \ref{var_hist_ep}, \ref{var_hist_nn}, \ref{var_hist_ls}.

\begin{figure}[ht!]
  \centering
  \begin{subfigure}[b]{0.2\textwidth}
    %\begin{adjustbox}{width=\linewidth} % rescale box
    \begin{tikzpicture}
\begin{axis}[
    ybar,
    ymin=1,
    ymax = 100,
    xmin = 62,
    xmax = 68,
    width=1.5\linewidth,
    xtick distance=1,
]
\addplot +[
    hist={
        %bins=1,
        %data min=62,
        %data max=68
    }   
] table[y index = 0]{datas/var_est/var_hist_explicit_param_100000.csv};
\end{axis}
\end{tikzpicture}
\end{subfigure}
\hfill
\begin{subfigure}[b]{0.2\textwidth}
    \begin{tikzpicture}
\begin{axis}[
    ybar,
    ymin=1,
    ymax = 100,
    xmin = 62,
    xmax = 68,
    width=1.5\linewidth,
    xtick distance=1,
]
\addplot +[
    hist={
        %bins=1,
        %data min=62,
        %data max=68
    }   
] table[y index = 0]{datas/var_est/var_hist_explicit_param_1000000.csv};
\end{axis}
\end{tikzpicture}
\end{subfigure}
\hfill
\begin{subfigure}[b]{0.2\textwidth}
    \begin{tikzpicture}
\begin{axis}[
    ybar,
    ymin=1,
    ymax = 100,
    xmin = 62,
    xmax = 68,
    width=1.5\linewidth,
    xtick distance=1,
]
\addplot +[
    hist={
        %bins=1,
        %data min=62,
        %data max=68
    }   
] table[y index = 0]{datas/var_est/var_hist_explicit_param_10000000.csv};
\end{axis}
\end{tikzpicture}
\end{subfigure}
\hfill
\begin{subfigure}[b]{0.2\textwidth}
    \begin{tikzpicture}
\begin{axis}[
    ybar,
    ymin=1,
    ymax = 100,
    xmin = 62,
    xmax = 68,
    width=1.5\linewidth,
    xtick distance=1,
]
\addplot +[
    hist={
        %bins=5,
        %data min=62,
        %data max=68
    }   
] table[y index = 0]{datas/var_est/var_hist_explicit_param_500000000.csv};
\end{axis}
\end{tikzpicture}
\end{subfigure}
\caption{Distribution of swing prices using \textit{PV strat}. From left to right we used successively $M_{e} = 10^{5}, 10^{6}, 10^{7}, 5 \times 10^{8}$ simulations.}
\label{var_hist_ep}
\end{figure}
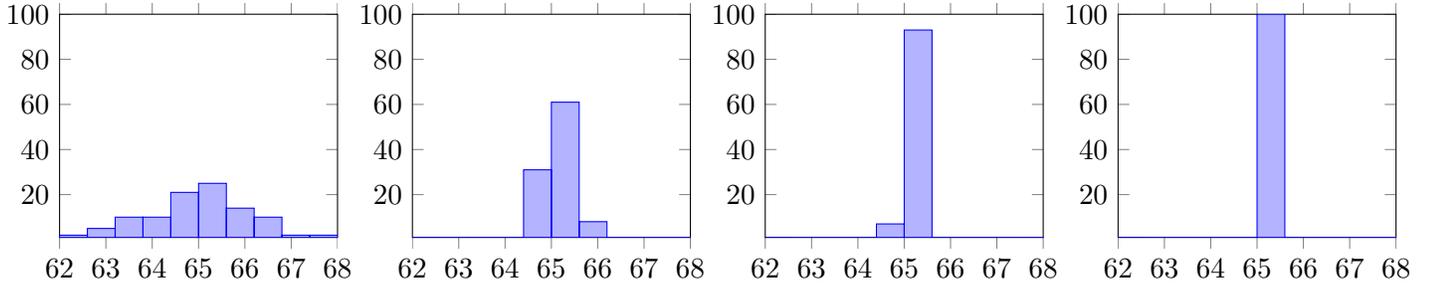

\newpage

\begin{figure}[ht!]
  \centering
  \begin{subfigure}[b]{0.2\textwidth}
    %\begin{adjustbox}{width=\linewidth} % rescale box
    \begin{tikzpicture}
\begin{axis}[
    ybar,
    ymin=1,
    ymax = 100,
    xmin = 62,
    xmax = 68,
    width=1.5\linewidth,
    xtick distance=1,
]
\addplot +[
    hist={
        %bins=5,
        %data min=62,
        %data max=68
    }   
] table[y index = 0]{datas/var_est/var_hist_nn_param_100000.csv};
\end{axis}
\end{tikzpicture}
\end{subfigure}
\hfill
\begin{subfigure}[b]{0.2\textwidth}
    \begin{tikzpicture}
\begin{axis}[
    ybar,
    ymin=1,
    ymax = 100,
    xmin = 62,
    xmax = 68,
    width=1.5\linewidth,
    xtick distance=1,
]
\addplot +[
    hist={
        %bins=2,
        %data min=62,
        %data max=68
    }   
] table[y index = 0]{datas/var_est/var_hist_nn_param_1000000.csv};
\end{axis}
\end{tikzpicture}
\end{subfigure}
\hfill
\begin{subfigure}[b]{0.2\textwidth}
    \begin{tikzpicture}
\begin{axis}[
    ybar,
    ymin=1,
    ymax = 100,
    xmin = 62,
    xmax = 68,
    width=1.5\linewidth,
    xtick distance=1,
]
\addplot +[
    hist={
        %bins=2,
        %data min=62,
        %data max=68
    }   
] table[y index = 0]{datas/var_est/var_hist_nn_param_10000000.csv};
\end{axis}
\end{tikzpicture}
\end{subfigure}
\hfill
\begin{subfigure}[b]{0.2\textwidth}
    \begin{tikzpicture}
\begin{axis}[
    ybar,
    ymin=1,
    ymax = 100,
    xmin = 62,
    xmax = 68,
    width=1.5\linewidth,
    xtick distance=1,
]
\addplot +[
    hist={
        %bins=1,
        %data min=62,
        %data max=68
    }   
] table[y index = 0]{datas/var_est/var_hist_nn_param_500000000.csv};
\end{axis}
\end{tikzpicture}
\end{subfigure}
\caption{Distribution of swing prices using \textit{NN strat}. From left to right we used successively $M_{eval} = 10^{5}, 10^{6}, 10^{7}, 5 \times 10^{8}$ simulations.}
\label{var_hist_nn}
\end{figure}

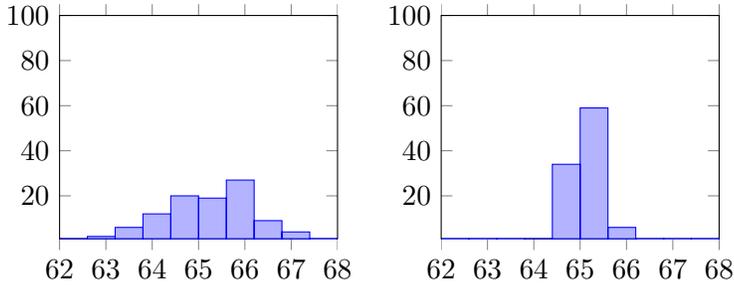
\begin{figure}[ht!]
  %\centering
  \begin{subfigure}[b]{0.2\textwidth}
    %\begin{adjustbox}{width=\linewidth} % rescale box
    \begin{tikzpicture}
\begin{axis}[
    ybar,
    ymin=1,
    ymax = 100,
    xmin = 62,
    xmax = 68,
    width=1.5\linewidth,
    xtick distance=1,
]
\addplot +[
    hist={
        %bins=5,
        %data min=62,
        %data max=68
    }   
] table[y index = 0]{datas/var_est/var_hist_ls_100000.csv};
\end{axis}
\end{tikzpicture}
\end{subfigure}
%\hfill
\hspace*{1.3cm}
\begin{subfigure}[b]{0.2\textwidth}
    \begin{tikzpicture}
\begin{axis}[
    ybar,
    ymin=1,
    ymax = 100,
    xmin = 62,
    xmax = 68,
    width=1.5\linewidth,
    xtick distance=1,
]
\addplot +[
    hist={
        %bins=2,
        %data min=62,
        %data max=68
    }   
] table[y index = 0]{datas/var_est/var_hist_ls_1000000.csv};
\end{axis}
\end{tikzpicture}
\end{subfigure}

\caption{Distribution of swing prices using Longstaff-Schwartz method. From left to right we used successively $10^{5}, 10^{6}$ simulations. Higher number of simulations leads to memory overflow.}
\label{var_hist_ls}
\end{figure}

\noindent
Figures \ref{var_hist_ep}, \ref{var_hist_nn}, and \ref{var_hist_ls} demonstrate that prices can significantly fluctuate, regardless of the pricing method. Using 1000000 simulations is insufficient to obtain a reliable price estimate. Storage limitations prevent the Longstaff-Schwartz method from exceeding this number of simulations, and this limit is even lower when the number of exercise dates is high. On the contrary, our proposed methods allow to increase the number of simulations as needed once the strategy trained. This is possible because we can evaluate our strategies on mini-batches sequentially instead of using the entire test set.

\vspace{0.2cm}
Hereafter (see Figure \ref{cpu_time}) we present CPU time for \textit{PV strat} and \textit{NN strat}. We use one batch with size of $2^{14}$ and the same PSGLD setting.

\newpage

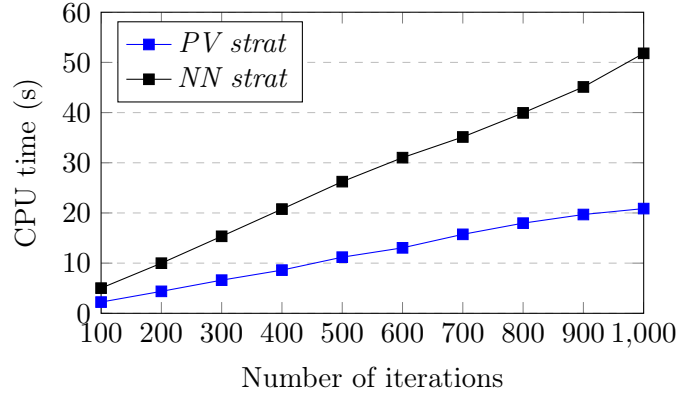
\begin{figure}[ht!]
\centering
\begin{tikzpicture}
\begin{axis}[
    xlabel={Number of iterations},
    ylabel={CPU time (s)},
    xmin=100, xmax=1000,
    ymin=0, ymax=60,
    xtick={100,200,300,400,500,600,700,800,900,1000},
    ytick={0,10,20,30,40,50,60},
    legend pos=north west,
    ymajorgrids=true,
    grid style=dashed,
   	width=0.5\linewidth,
	height=0.2\paperheight,
]

\addplot[
    color=blue,
    mark=square*,
    ]
    coordinates {
    (100,2.24)(200,4.38)(300,6.61)(400,8.61)(500,11.19)(600,13.05)(700,15.74)(800,17.97)(900,19.69)(1000,20.87)
    };
    %\legend{Strategy 1};
    \addlegendentry{\textit{PV strat}};

\addplot[
    color=black,
    mark=square*,
    ]
    coordinates {
    (100,5.02)(200,9.99)(300,15.35)(400,20.77)(500,26.24)(600,31.02)(700,35.14)(800,39.94)(900,45.11)(1000,51.81)
    };
    %\legend{Strategy 2}
    \addlegendentry{\textit{NN strat}};
    
\end{axis}

\end{tikzpicture}
\caption{CPU time (in seconds) as a function of the number of iterations.}
\label{cpu_time}
\end{figure}

\section{Summary tables: Adam \textit{versus} PSGLD}
\label{summary_table_algo}

\indent

Hereafter are recorded all the results obtained using Adam and PSGLD updatings with different hyperparameters combinations. Prices are computed by averaging 100 replications of prices; each obtained using a sample size of $M_{e} = 10^6$. We consider a swing contract with \textit{case 1} setting. Results are recorded in tables \ref{results_explicit_param}, \ref{results_nn_param_1}, \ref{results_psgld_explicit_param}, \ref{results_psgld_nn_param} where we denote by $n_b$ the number of batches used.

\begin{table}[ht]
    \centering
\begin{tblr}{colspec={c},hlines}
\hline
     $B \times n_b$ & $\gamma$ & $N$ &  $\widehat{P_0}$ &  $\widehat{P_0}^{BG}$ & Time (s) \\
     \hline
     $2^{14} \times 1$ & $0.1$ & 3000 & 65.11 ([65.04, 65.18])& 65.15 ([65.08, 65.22]) & 68.4\\
     $2^{14} \times 1$ & $0.1$ & 1000 & 65.02 ([64.95, 65.08])& 65.18 ([65.12, 65.24]) & 21.3\\
     $2^{12} \times 4$ & $0.1$ & 3000 & 65.14 ([65.08, 65.21])& 65.13 ([65.07, 65.20]) & 261.2\\
     $2^{12} \times 4$ & $0.1$ & 1000 & 65.15 ([65.08, 65.22])& 65.17 ([65.11, 65.24]) & 88.8\\
     $2^{14} \times 1$ & $0.01$ & 3000 & 64.73 ([64.63, 64.79])& 65.05 ([64.98, 65.11]) & 64.4\\
     $2^{14} \times 1$ & $0.01$ & 1000 & 63.91 ([63.83, 63.98])& 64.70 ([64.62, 64.77]) & 22.1\\
     $2^{12} \times 4$ & $0.01$ & 3000 & 65.11 ([65.04, 65.18])& 65.15 ([65.08, 65.22]) & 258.2\\
     $2^{12} \times 4$ & $0.01$ & 1000 & 64.81 ([64.74, 64.88])& 65.03 ([64.98, 65.12]) & 88.3\\
\end{tblr}
\caption{Summary table for \textit{PV strat} using Adam. The values in brackets are confidence intervals (95\%). The column \q{time} includes both training and valuation time.}
\label{results_explicit_param}
\end{table}

\newpage
\begin{table}[ht]
    \centering
\begin{tblr}{colspec={c},hlines}
\hline
     $B \times n_b$ & $\gamma$ & $N$ &  $\widehat{P_0}$ &  $\widehat{P_0}^{BG}$ & Time (s) \\
     \hline
     $2^{14} \times 1$ & $0.1$ & 3000 & 65.22 ([65.16, 65.28])& 65.23 ([65.17, 65.29]) & 150.4\\
     $2^{14} \times 1$ & $0.1$ & 1000 & 65.13 ([65.05, 65.20])& 65.22 ([65.15, 65.29]) & 53.2\\
     $2^{12} \times 4$ & $0.1$ & 3000 & 65.26 ([65.21, 65.32])& 65.23 ([65.17, 65.29]) & 603.4\\
     $2^{12} \times 4$ & $0.1$ & 1000 & 65.20 ([65.14, 65.26])& 65.20 ([65.14, 65.26]) & 133.5\\
     $2^{14} \times 1$ & $0.01$ & 3000 & 64.78 ([62.65, 64.90])& 65.06 ([64.98, 65.14]) & 157.1\\
     $2^{14} \times 1$ & $0.01$ & 1000 & 64.81 ([64.74, 64.88])& 65.05 ([64.98, 65.12]) & 55.1\\
     $2^{12} \times 4$ & $0.01$ & 3000 & 65.20 ([65.15, 65.26])& 65.20 ([65.14, 65.26]) & 607.3\\
     $2^{12} \times 4$ & $0.01$ & 1000 & 65.02 ([64.92, 65.11])& 65.17 ([65.09, 65.24]) & 135.4\\
\end{tblr}
\caption{Summary table for \textit{NN strat} using Adam. We used a neural network architecture as follows: 2 hidden layers ($I = 2$) and 10 units per layer ($q_1 = 10, q_2 = 10$). The values in brackets are confidence intervals (95\%). The time includes the training and the valuation time.}
\label{results_nn_param_1}
\end{table}

\begin{table}[ht!]
    \centering
\begin{tblr}{colspec={c},hlines}
\hline
     $\sigma$ & $\beta$  &  $\widehat{P_0}$ &  $\widehat{P_0}^{BG}$ & Time (s) \\
     \hline
      $1e^{-5}$ & 0.8  & 65.15 ([65.09, 65.22])& 65.17 ([65.10, 65.23]) & 22.6\\
     $1e^{-5}$ & 0.7  & 65.13 ([65.05, 65.21])& 65.15 ([65.07, 65.22]) & 22.1\\
      $1e^{-5}$ & 0.9 & 65.22 ([65.16, 65.28])& 65.23 ([65.18, 65.29]) & 21.6\\
     $1e^{-6}$ & 0.8  & 65.19 ([65.12, 65.26])& 65.21 ([65.13, 65.28]) & 21.3\\
     $1e^{-6}$ & 0.9  & 65.21 ([65.15, 65.28])& 65.23 ([65.17, 65.29]) & 21.4\\
\end{tblr}
\caption{Summary table for \textit{PV strat} using PSGLD. The values in brackets are confidence intervals (95\%). The column \q{time} includes both training and valuation time. We used a learning rate equal to 0.1, one batch of size $B =2^{14}$, $\lambda = 1 \cdot e^{-10}$ and 1000 iterations.}
\label{results_psgld_explicit_param}
\end{table}

\begin{table}[ht!]
    \centering
\begin{tblr}{colspec={c},hlines}
\hline
     $\sigma$ & $\beta$ &  $\widehat{P_0}$ &  $\widehat{P_0}^{BG}$ & Time (s) \\
     \hline
      $1e^{-5}$ & 0.8 & 65.18 ([65.12, 65.24])& 65.16 ([65.10, 65.22]) & 53.6\\
      $1e^{-5}$ & 0.7 & 65.22 ([65.15, 65.30])& 65.20 ([65.13, 65.27]) & 52.1\\
      $1e^{-5}$ & 0.9 & 65.23 ([65.14, 65.31])& 65.22 ([65.13, 65.30]) & 52.3\\
     $1e^{-6}$ & 0.8 & 65.27 ([65.20, 65.35])& 65.26 ([65.18, 65.33]) & 51.3\\
     $1e^{-6}$ & 0.9 & 65.23 ([65.16, 65.30])& 65.21 ([65.14, 65.28]) & 53.4\\
\end{tblr}
\caption{Summary table for \textit{NN strat} using PSGLD. We used $I = 2$ layers with $q_1 = q_2 = 10$ units. The values in brackets are confidence intervals (95\%). The column \q{time} includes both training and valuation time. We used a learning rate equal to 0.1, one batch of size $B = 2^{14}$, $\lambda = 1 \cdot e^{-10}$ and 1000 iterations.}
\label{results_psgld_nn_param}
\end{table}

\section{Langevin based optimization}
\label{langevin_optim}

This appendix aims at providing some theoretical elements on Langevin dynamics as well as describing its link to minimization problems. Assume that the potential $J$ is at least $C^1$ with Lipschitz gradient $\nabla J$ (hence sub-quadratic) and that one can represent this gradient as the expectation of a local gradient $H$ defined on $\Theta \times \mathbb{R}^p$
\begin{equation}
\label{V_as_E_loc_grad}
 \nabla J(\theta) = \mathbb{E}\big(H(\theta, Z)\big),   
\end{equation}
where $Z$ is a $\mathbb{R}^p$-valued random vector, supposed to be simulable (it can be picked up at random a datum in a dataset). Then the resulting Stochastic Gradient Descent (SGD) reads
\begin{equation}
\label{eq:SGD1}
\theta_{n+1} = \theta_n -\gamma H(\theta_n, Z_{n+1}), \; n\ge 0,\quad \theta_0 \! \in \Theta
\end{equation}
which can be rewritten as 

\begin{equation}
\label{eq:SGD2}
\theta_{n+1} = \theta_n -\gamma \nabla J(\theta_n)+ \gamma \Delta M_{n+1}, \; n\ge 0,\quad \theta_0\!\in \Theta,
\end{equation}
where $\Delta M_{n+1} = \nabla J(\theta_n)- H(\theta_n,Z_{n+1})$, $n\ge 1$,  is a sequence of martingale increments (with respect to  the filtration ${\cal F}_n = \sigma(\theta_0,Z_1,\ldots,Z_n)$, $n\ge 1$).

\vspace{0.2cm}
The idea of the Langevin dynamics is to inject an {\em independent} $d$-dimensional centered (Gaussian) white noise $(W_n)_{n\ge 1}$ by considering 
\begin{equation}
\label{eq:Langevin_0}
\theta_{n+1} = \theta_n -\gamma \nabla J(\theta_n)+ \gamma \Delta M_{n+1} +\sqrt{2\gamma}\,\sigma W_{n+1}, \; n\ge 0,\quad \theta_0\!\in \Theta.
\end{equation}
The impact of this exogenous noise  $\sqrt{\gamma} \sigma W_{n+1}$ with covariance matrix $\gamma\sigma^2I_d$ compared to that of the endogenous noise whose variance reads $ {\rm Var}( \gamma \Delta M_{n+1})= \gamma^2\|\Delta M_{n+1}\|^2$ (under standard assumptions of Robbins-Siegmund type, the martingale increments  are $L^2$-bounded) is clearly prominent when $\gamma,\sigma\to 0$ as long as $\gamma= o(\sigma^2)$ since then $\gamma^2=o(\gamma)$. One shows that (see \cite{bras:hal-03891234}) this martingale term  can be neglected, mathematically speaking, in the analysis of the procedure. So one can simply focus on the procedure
\begin{equation}
\label{eq:Langevin_0bis}
\theta_{n+1} = \theta_n -\gamma \nabla J(\theta_n)+\sqrt{2\gamma}\,\sigma W_{n+1}, \; n\ge 0,\quad \theta_0\!\in \Theta
\end{equation}
which  turns to be the Euler scheme with step $\gamma$ of the so-called Langevin dynamics
\begin{equation}
\label{langevin_sde}
d\theta_t = - \nabla J(\theta_t) dt + \sqrt{2}\sigma dB_t,\quad \theta_0 \!\in \Theta,
\end{equation}
\noindent
where $B$ is a $p$-dimensional Brownian motion. 

It is classical background that, for $\Theta = \mathbb{R}^d$, if $\int_{\mathbb{R}^d}e^{-\frac{J(\xi)}{\sigma^2}}d\xi<+\infty$,  then $\theta_t$ (in \eqref{langevin_sde}) converges in distribution (and for any ${\cal W}_p$-Wasserstein distance) toward its invariant distribution, the Gibbs measure $\nu^{(\sigma)} = C_{_{J,\sigma}}e^{-\frac{J}{\sigma^2}}\cdot\lambda_d$ (where $\lambda_d$ is the Lebesgue measure on $\mathbb{R}^d$). Similarly, if $\gamma$ is small enough, then $\theta_n$ converges in distribution toward an invariant distribution $\nu^{(\sigma)}_{\gamma}$ close to  $\nu^{(\sigma)}$. 

Note that,  if the constant step $\gamma$ is replaced in~\eqref{eq:Langevin_0bis} by a decreasing sequence $(\gamma_n)_{n\ge 1}$ such that $\Gamma_n =\gamma_1+\cdots +\gamma_n \to +\infty$ and $\sum_{n\ge 1}\frac {\gamma_n}{\Gamma_n^2}<+\infty$, then $\theta_n$ converges in distribution toward  $\nu^{(\sigma)}$ (see~\cite{BasakB1992}, see also~\cite{LambertonP2002}).

The use of such Langevin dynamics based algorithms for optimisation is justified by the fact that recent studies (see \cite{Dalalyan2014TheoreticalGF,Dalalyan2017FurtherAS}) stressed the fact  that sampling from a distribution concentrated around the global minimum or global minima of $J$ is a similar task as minimising $J$ via certain optimisation algorithms. This is theoretically justified by the fact that  $\nu^{(\sigma)}$ weakly converges to the set ${\cal P}({\rm argmin}\, J)$ of ${\rm argmin}\, J$-supported distribution as $\sigma\to 0$ and if ${\rm argmin}\, J=\{\theta^*\}$, then $\nu^{(\sigma)}\stackrel{w}{\to} \delta_{\theta^*}$. More general results, when $J$ is $C^2$ and precise results about the rate of this convergence even in the case where the Hessian of $J$ is degenerated on ${\rm argmin}\, J$ can be found in the recent paper~\cite{Bras2022}.

To still  improve the convergence and in particular to help the (SGLD) procedure escape from local minima,  practitioners introduced some  {\em preconditioners} (see~\cite{Li2015PreconditionedSG}) by  making $\sigma$ depend on  $\theta_t$ in~\eqref{langevin_sde} (usually through a function of $\nabla J(\theta_t)$ or $J(\theta_t)$ itself). A theoretical background has been provided in~\cite{Pags2020UnadjustedLA} to explain this heuristics. First, the  authors noted that  the diffusion
\begin{equation}
\label{sde_gen}
d\theta_t = b(\theta_t)dt +\sqrt{2}\sigma\, \vartheta(\theta_t) dB_t, \hspace{0.2cm} \theta_0\!\in \mathbb{R}^d
\end{equation}
%
%\noindent
%with $W$ a standard $q$-dimensional Brownian motion and a drift term of the form
where 
\begin{equation}\label{eq:newb}
b := - \Big((\vartheta \vartheta^\top) \nabla J - \sigma^2\Big[\sum_{j = 1}^{d} \partial_{\theta_j} (\vartheta \vartheta^\top)_{ij} \Big]_{i =1:d} \Big)
\end{equation} 
also has $\nu^{(\sigma)}$ as a unique invariant distribution (under an ellipticity assumption on the preconditioner $\vartheta$). The implementable  version of~\eqref{eq:Langevin_0bis}, known as PSGLD, is simply the  standard Euler scheme of this SDE with step $\gamma>0$
\begin{equation}\label{eq:Langevin_0ter}
\theta_{n+1} = \theta_n -\gamma  b (\theta_n)+\sqrt{2\gamma}\,\sigma \vartheta (\theta_n)W_{n+1}, \; n\ge 0,\quad y_0\!\in \mathbb{R}^d.
\end{equation}
 (In fact the version investigated in~\cite{Pags2020UnadjustedLA} has a decreasing step $(\gamma_n)_{n\ge 1}$  and directly converges toward $\nu^{(\sigma)}$). 
 
 Practitioners usually consider diagonal preconditioners of the form
\[
\forall\, \theta=(\theta^1,\ldots,\theta^d)\!\in \mathbb{R}^d, \quad \vartheta \vartheta^\top(\theta) = {\rm Diag}\Big(\big(\varphi(\partial_{\theta^1}J(\theta))\big)^2,\cdots, \big(\varphi(\partial_{\theta^d}J(\theta))\big)^2\Big).
\]

Numerical experiments carried out by practitioners suggest that  the resulting correcting term in the drift, which is computationally demanding in terms of complexity,  can be neglected in~\eqref{eq:newb} without damage for practical implementation (see \cite{Li2015PreconditionedSG}).

\medskip A simulated annealing version of the procedure~\eqref{eq:Langevin_0bis} has been introduced and analysed in~\cite{bras:hal-03891234} in which   $\gamma$ and $\sigma$ are  no longer constant but of the form $\gamma_n \downarrow 0$, $\Gamma_n \to+\infty$ and $\sigma_n = A/\sqrt{\log (n+1)}$. In this setting,  it is proven that $\theta_n \to {\rm argmin} \, J$ in probability.

%\nocite{*}
\bibliographystyle{plain}
\bibliography{biblio.bib}

\end{document}